\let\oldnl\nl
\newcommand{\nonl}{\renewcommand{\nl}{\let\nl\oldnl}}
\newtheorem{problem}{Problem}
\newtheorem{remark}{Remark}
\newtheorem{theorem}{Theorem}
\newtheorem{definition}{Definition}
\newtheorem{lemma}{Lemma}
 \newcommand{\citep}[1]{\cite{#1}}
\begin{document}
\title{Learning Circular Hidden Quantum Markov Models: A Tensor Network Approach}

\author{\IEEEauthorblockN{Mohammad Ali Javidian, \and Vaneet Aggarwal, \and  Zubin Jacob }\\
\IEEEauthorblockA{\textit{School of Electrical and Computer Engineering} \\
\textit{Purdue University}\\
\{mjavidia, vaneet, zjacob\}@purdue.edu}
}

\maketitle
\begin{abstract}
In this paper, we propose circular  Hidden Quantum Markov Models (c-HQMMs), which can be applied for modeling temporal data in quantum datasets (with classical datasets as a special case). We show that c-HQMMs are  equivalent to a constrained tensor network (more precisely, circular Local Purified State with positive-semidefinite decomposition) model. This equivalence enables us to provide an efficient learning model for c-HQMMs. The proposed learning approach is evaluated on six real datasets and demonstrates the advantage of c-HQMMs on multiple datasets as compared to  HQMMs, circular HMMs, and HMMs.
\end{abstract}

\section{Introduction}\label{sec:intro}
Hidden Markov Models (HMMs) are commonly used for modeling temporal data, usually, in cases where the underlying probability distribution is unknown, but certain output observations are known \citep{rabiner1986introduction,zucchini2009hidden}. Hidden Quantum Markov Models (HQMMs) \citep{Monras2010HiddenQM,clark2015hidden} can be thought of a reformulation of HMMs in the language of quantum systems (see section \ref{sec:related} for the formal definition). It has been shown that quantum
formalism allows for more efficient description of a given stochastic process  as compared to the classical case \citep{Monras2010HiddenQM,clark2015hidden,adhikary2021quantum}. In this paper, we propose circular HQMMs, and validate them to be more efficient than HQMMs.


\if 0
The topology of HMMs can be investigated in two main categories: Ergodic (fully connected) and temporal (left-to-right) topologies. In an ergodic model, each state can be reached from any other state with probability one. However, ergodic models do not fully
utilize the sequential information of the temporal data. On the other hand, in the left-to-right topologies each state can go back to itself or the following states only. So, when
the feature sets are coming from a circular data,  it is very difficult to identify consistent starting and ending points on a boundary of the object to represent the observation sequence. This per se leads to very low
recognition rates if the feature sets do not have a geometrically meaningful starting and terminating points. In many of real world problems we dealing with  temporal data streams that often exhibit cyclical behavior. So, we need a Hidden Markov topology which is both temporal and ergodic. Circular HMMs (see section \ref{sec:related} for formal definition) were introduced to overcome this difficulty. 
\fi 

We note that circular HMMs (c-HMMs) have been proposed to model HMMs, where the initial and terminal hidden states are connected through the state transition probability \citep{arica2000shape}.  c-HMMs have found application in speech recognition \citep{zheng1988text,shahin2006enhancing},  biology and meteorology \citep{holzmann2006hidden}, shape recognition \citep{arica2000shape,cai2007image}, biomedical engineering \citep{coast1990approach}, among others. Given the improved performance of c-HMMs as compared to HMMs, it remains open if such an extension can be done for HQMMs, which is the focus of this paper.

Even though multiple algorithms for learning HQMMs have been studied, direct learning of the model parameters is inefficient and ends up to poor local-optimal points \citep{adhikary2020expressiveness}. In order to deal with this challenge, a tensor network based approach is used to learn HQMMs \citep{adhikary2021quantum}, based on a result that HQMM is equivalent to uniform locally purified states (LPS) tensor network. The model in \citep{adhikary2021quantum} deals with infinite horizon HQMM, which involves uniform Kraus operators and thus uniform LPS. In this paper, we model a finite sequence of random variables, which allows us to have different Kraus operators at each time instant. We further extend the finite-horizon HQMMs to circular HQMMs (c-HQMMs).

In order to train the parameters of the c-HQMM, we show the equivalence of c-HQMM with a restricted class of tensor networks. In order to do that, we first define a class of tensor networks, called circular LPS (c-LPS). Then, we show that c-HQMM is equivalent to c-LPS where certain matrices formed from the decomposition are positive semi-definite (p.s.d.). Finally, we propose an algorithm to train c-LPS with p.s.d. restrictions, thus providing an efficient algorithm for learning c-HQMMs. 


The results in this paper show equivalence of finite-horizon HQMMs and c-HQMMs to the corresponding tensor networks. Further, we   show that c-HMM is equivalent to a class of tensor networks (circular Matrix Product State (c-MPS)) with non-negative real entries. This allows for an alternate approach of learning  c-HMMs, and may be of independent interest. 
\if 0
Tensor networks are powerful mathematical tool that provide ways of decomposing multidimensional large-scale arrays (tensors) into small-scale low-order elements, and had been
successfully used in the study of many-body quantum systems.
As shown recently in \citep{Glasser2019}, tensor networks have direct correspondence with HMMs. In particular,  non-negative matrix product states (MPS) are as expressive as HMMs \citep{Glasser2019}, and uniform locally purified states are as expressive as QHMMs \citep{adhikary2021quantum}. However, MPS and other tensor network representations in \citep{Glasser2019,adhikary2021quantum} are temporal but not ergodic. This means, when we are dealing with a temporal but circular data, the periodic nature of data boundary requires a tensor topology which is both temporal and ergodic. To address this challenge, we propose a new tensor topology based on tensor rings which is both
temporal and ergodic. 
To be specific:\\
\fi 
The key contributions of this work are summarized as follows:

$\bullet$ We propose c-HQMM for modeling finite-horizon temporal data. 

$\bullet$ We show that c-HQMMs are  equivalent to c-LPS tensor networks with positive semi-definite matrix structure in the decomposition. Further, equivalent tensor structures for finite horizon HQMMs and c-HMMs are also provided. 

$\bullet$ Learning algorithm for c-HQMM is provided using the tensor network equivalence.

 In order to validate the proposed framework of c-HQMM and the proposed learning algorithm, we compare with standard HMMs (equivalent to MPS with non-negative real entries in decomposition), c-HMMs, and HQMMs. Evaluation on realistic datasets demonstrate the improved performance of c-HQMMs for modeling temporal data.

\if 0

introduce Circular Matrix Product State (CMPS) as a new tensor ring model of discrete multivariate probability distributions in section \ref{sec:tensorring}. We show that non-negative CMPS are circular HMMs (section \ref{sec:relationship}), and we provide an algorithm for learning them (section \ref{sec:learning}).\\
$\bullet$ We introduce Circular Hidden Quantum Markov Models (CHQMMs) as a new temporal and ergodic topology for HQMMs. To address the problem of learning circular HQMMs, we introduce Circular Locally Purified State (CLPS) in section \ref{sec:tensorring}. We show that CLPS restricted to the positive semidefinite matrices are as expressive as circular HQMMs (section \ref{sec:relationship}). In section \ref{sec:learning}, we introduce an algorithm for learning CLPS.\\
The results of numerical experiments are provided in section \ref{sec:experiments}, suggesting that tensor rings can be used over tensor trains even for non-circular data.
\fi 

\section{Related Work and Background}\label{sec:related}
In this section, we briefly review the key related literature on hidden Markov models and tensor networks, with relevant definitions. 


{\em Hidden  Markov Models (HMMs)} \citep{rabiner1986introduction,zucchini2009hidden} are a class of probabilistic graphical models that have found greatest use in problems that enjoy an inherent temporality. These problems consist of a process that unfolds in time, i.e., we have states at time
$t$ that are influenced directly by a state at $t - 1$. 
HMMs have found application in such problems, for instance speech recognition \citep{juang1991hidden}, gesture recognition \citep{wilson1999parametric}, face recognition \citep{nefian1998hidden}, finance \citep{mamon2007hidden}, computational biology \citep{siepel2004combining,koski2001hidden,krogh1994hidden}, among others.  A finite-horizon hidden Markov model or HMM, as shown in Figure \eqref{fig:hmm}, consists of a discrete-time, discrete-state Markov chain, with hidden states $X_t\in\{1,\cdots,d\}, t\in\{1,\cdots,N\}$\footnote{Finite horizon implies that $N$ is finite, and thus the distributions can depend on the time-index. This paper focuses on finite horizon probability distributions. }, plus an observation model $p(o_t|x_t)$. The corresponding joint distribution has the form:
\begin{equation}
      p(X_{1:N},\mathcal{O}_{1:N})=p(X_{1:N})p(\mathcal{O}_{1:N}|X_{1:N})=\left[p(x_1)\Pi_{t=2}^{N}p(x_t|x_{t-1})\right]\left[\Pi_{t=1}^{N}p(o_t|x_t)\right]\label{eq:hmm} 
\end{equation}
\begin{figure*}[!ht]
\begin{subfigure}[t]{0.49\textwidth}
    \centering
    \includegraphics[width=\textwidth]{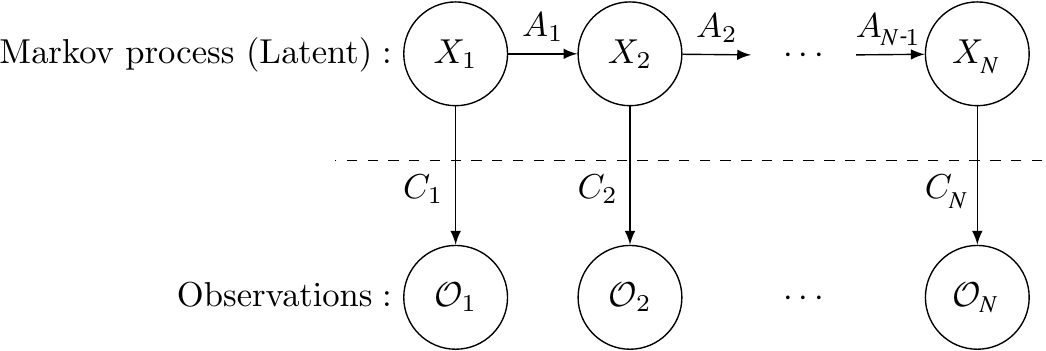}
    \caption{Representation of an HMM with observation variables $\mathcal{O}_i$, latent variables $X_i$, transition matrices $A_i$, and emission matrices $C_i$.}
    \label{fig:hmm}
\end{subfigure}
\hfill
\begin{subfigure}[t]{0.49\textwidth}
        \includegraphics[width=\textwidth]{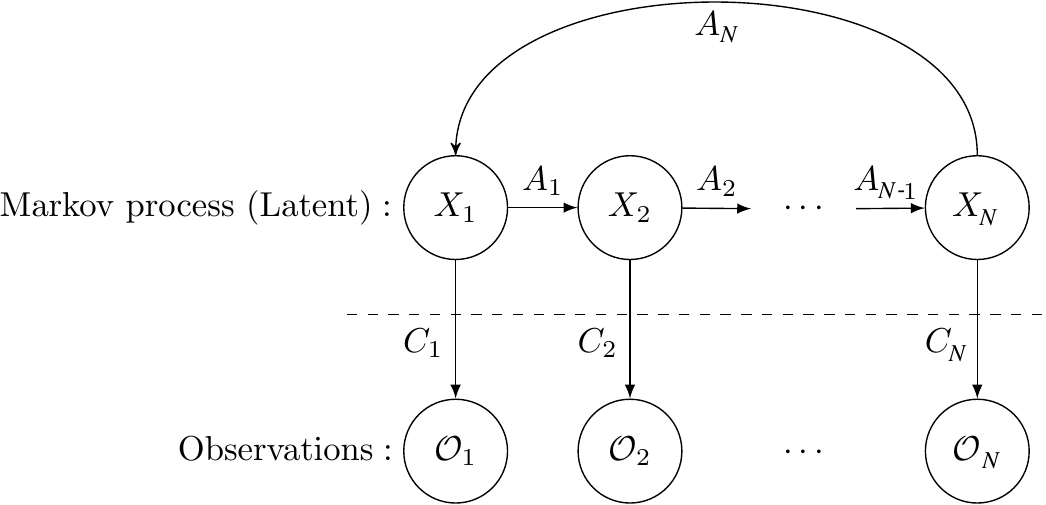}
    \caption{Representation of a c-HMM  with observation variables $\mathcal{O}_i$, latent variables $X_i$, transition matrices $A_i$, and emission matrices $C_i$.}
    \label{fig:chmm}
\end{subfigure}
\caption{Hidden Markov Model and Cyclic Hidden Markov Model.}
\end{figure*}

As shown in Figure \eqref{fig:hmm}, evolution of hidden states is governed by column-stochastic matrices $A_i$'s, called transition matrices, and emission matrices are column-stochastic matrices $C_i$'s that determine the observation probabilities. Despite the fact that HMMs are a powerful and versatile tool for statistical modeling of complex time-series data and stochastic dynamic systems, many real-world problems include circular data (e.g., measurements in the form of angles or other periodic values), such as biology, climatology, oceanography, geophysics, and astronomy. In these problems, the periodic nature of the boundary requires a Hidden Markov topology which is both \textit{temporal} (has a sequential order) and \textit{ergodic} to allow the revisits of a
state as the boundary returns to the starting point and
repeats itself \citep{arica2000shape}. c-HMMs were proposed to address these problems. Formally, a c-HMM is a  modification of  HMM model, where the initial and terminal hidden states are connected through the state transition probability $A_N$, as shown in Figure \eqref{fig:chmm}. The corresponding joint distribution has the form:
\begin{equation}
    p(X_{1:N},\mathcal{O}_{1:N})=p(X_{1:N})p(\mathcal{O}_{1:N}|X_{1:N})=\left[p(x_1|x_N)\Pi_{t=2}^{N}p(x_t|x_{t-1})\right]\left[\Pi_{t=1}^{N}p(o_t|x_t)\right]\label{eq:chmm}
\end{equation}

{\em Hidden Quantum Markov Model (HQMM)} was introduced in \citep{Monras2010HiddenQM} to model evolution from one quantum state to another, while generating classical output symbols. To produce an output symbol, a  measurement or Kraus operation \citep{kraus1983states} is performed on the internal state of the machine. To implement a Kraus operation, one can
use an auxiliary quantum system, called \textit{ancilla}. In every time step, the internal state of the HQMM interacts with its ancilla, which is then read out by a projective measurement. After every measurement, the ancilla is reset into its
initial state, while the internal state of the HQMM remains hidden \citep{clark2015hidden}. 
As in the classical case, an HQMM can be composed by the repeated
application of the quantum sum rule (plays the role of transition matrices in HMMs) and quantum Bayes
rule (plays the role of emission matrices in HMMs) \citep{adhikary2019learning} encoded using the sets of Kraus operators $\{K_{t,w}\}$ and $\{K_{t,x}\}$, respectively, for $t\in \{1, \cdots, N\}$: 
\begin{equation}
    \rho'_t=\sum_w K_{t,w}\rho_{t-1}K_{t,w}^\dagger \tag{quantum sum rule}
\end{equation}
\begin{equation}
    \rho_t=\frac{ K_{t,x}\rho'_{t}K_x^\dagger}{\textbf{\textrm{tr}}(\sum_x K_{t,x}\rho'_{t}K_{t,x}^\dagger)}\tag{quantum Bayes
rule}
\end{equation}
We can condense these two expressions into a single term for a given observation $x$ by setting $K_{t,x,w}=K_{t,x}K_{t,w}$, for $t=1,\cdots,N$:
\begin{equation}
    \rho_t=\frac{ \sum_w K_{t,x,w}\rho_{t-1}K_{t,x,w}^\dagger}{\textbf{\textrm{tr}}(\sum_w K_{t,x,w}\rho_{t-1}K_{t,x,w}^\dagger)}\tag{state update rule}
\end{equation}

We now formally define HQMMs using the Kraus operator-sum
representation (the definition is modified from \citep{adhikary2020expressiveness} to account for finite $N$).

\begin{definition}[HQMM]\label{def:hqmm}
An N-horizon d-dimensional Hidden Quantum Markov Model with a set of discrete
observations $\mathcal{O}$ is a tuple $(\mathbb{C}^{d\times d}, \{K_{i,x,w_x}\})$, where the initial state $\rho_0\in \mathbb{C}^{d\times d}$ and the Kraus operators $\{K_{i,x,w_x}\} \in \mathbb{C}^{d\times d}$, for all ${x\in \mathcal{O}},{i\in\{1,\cdots,N\}},w_x\in \mathbb{N}$, satisfy the following constraints:

$\bullet$ $\rho_0$ is a density matrix of arbitrary rank, and

$\bullet$ the full set of Kraus operators across all observables provide a quantum operation,\footnote{If $\sum_w\mathcal{K}_w^\dagger \mathcal{K}_w=I$, then $\mathcal{K}=\sum_w \mathcal{K}_w\rho\mathcal{K}_w^\dagger$ is called a quantum channel. However, if $\sum_w\mathcal{K}_w^\dagger \mathcal{K}_w<I$, then $\mathcal{K}$ is called a stochastic quantum operation.} i.e., $\sum_{x,w_x}K_{i,x,w_x}^\dagger K_{i,x,w_x}=I$ for all $i\in\{1,\cdots,N\}$. 

The joint probability of a given sequence is given by:
\begin{eqnarray}
{
    p(x_1,\cdots,x_N)=\vec{I}^T \Bigg(\sum_{w_{x_N}}K_{N,x_N,w_{x_N}}^\dagger\otimes K_{N,x_N,w_{x_N}}}\Bigg) \cdots\Bigg(\sum_{w_{x_1}}K_{1,x_1,w_{x_1}}^\dagger\otimes K_{1,x_1,w_{x_1}}\Bigg)\vec{\rho_0}\label{eq:hqmm}
\end{eqnarray}
where $\vec{I}^T,\vec{\rho_0}$ indicate vectorization
(column-first convention) of identity matrix and $\rho_0$, respectively. This model is illustrated in Figure \eqref{fig:hqmm}.
\end{definition}

This representation was used in \citep{srinivasan2018learning} to show that any $d$ dimensional HMM can be simulated as an equivalent $d^2$ dimensional HQMM \cite[Algorithm 1]{srinivasan2018learning}.


 HQMMs enable us to generate more complex
random output sequences than HMMs, even when using the same number of internal states \citep{clark2015hidden, srinivasan2018learning}. In other words, HQMMs are strictly more expressive than classical HMMs \citep{adhikary2021quantum}.

{\em Tensor Network}  is a set of tensors (high-dimensional arrays), where some or all of its indices are contracted according to some pattern \citep{oseledets2011tensor,orus2014practical}. They have been used in the study of many-body quantum systems \citep{orus2019tensor,montangero2018introduction}. Further, they  have been adopted for supervised learning in large-scale machine learning \citep{stoudenmire2016supervised,wang2017efficient,wang2018wide}. Some of the classes of tensor networks we use in this work include variants of Matrix Product States (MPSs) and Locally Purified States (LPSs).

One class of tensor networks is Matrix Product State (MPS), where an order-$N$ tensor $T_{d\times \cdots \times d}$, with rank $r$ has entry $(x_1,\cdots,x_N)$ ($x_i\in \{1, \cdots, d$) given as
\begin{equation}\label{eq:mps}
	T_{x_1,\cdots,x_N} = \sum_{\{\alpha_i\}_{i=0}^{N}=1}^r A_{0}^{\alpha_0}A_{1,x_1}^{\alpha_0,\alpha_1}A_{2,x_2}^{\alpha_1,\alpha_2}\cdots A_{N-1,x_{N-1}}^{\alpha_{N-2},\alpha_{N-1}}A_{N,x_{N}}^{\alpha_{N-1},\alpha_{N}}A_{N}^{\alpha_N}
\end{equation}
where $A_k, k\in\{0,N+1\}$, is a vector of dimension $r$, where element $(\alpha_k)$ is denoted as  $A_{k}^{\alpha_k}$. Further, $A_k, k\in\{1,\cdots,N\}$ is an order-3 tensor of dimension $d\times r\times r$, where element $(x,\alpha_L,\alpha_R)$ is denoted as  $A_{k,x}^{\alpha_{L},\alpha_R}$. 
\begin{figure*}[!ht]
    \centering
    \begin{subfigure}[t]{0.35\textwidth}
    \centering
        \includegraphics[width=\textwidth]{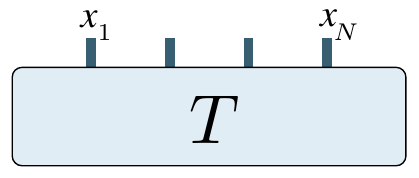}
        \caption{An order-$N$ Tensor $T$}
        \label{fig:ttrain}
    \end{subfigure}
    \hfill
    \begin{subfigure}[t]{0.35\textwidth}
    \centering
        \includegraphics[width=\textwidth]{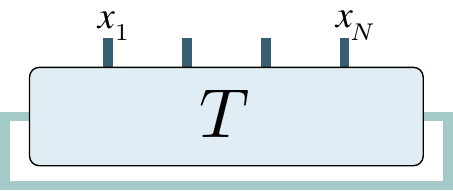}
        \caption{An order-$N$ circular Tensor $T$}
        \label{fig:tring}
    \end{subfigure}
    \hfill
    \begin{subfigure}[t]{0.49\textwidth}
    \centering
        \includegraphics[width=\textwidth]{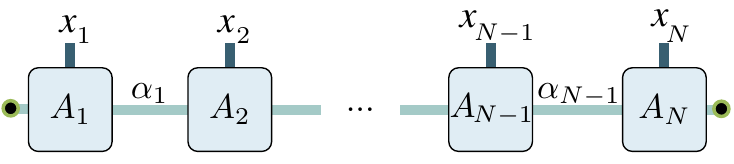}
        \caption{An order-$N$ MPS $T$}
        \label{fig:mps}
    \end{subfigure}
    \hfill
    \begin{subfigure}[t]{0.49\textwidth}
    \centering
        \includegraphics[width=\textwidth]{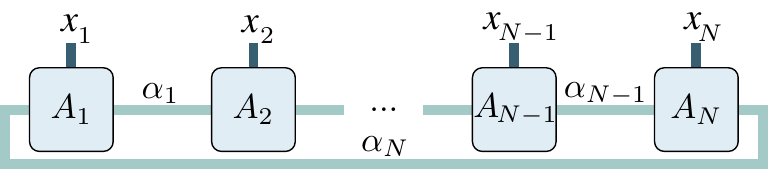}
        \caption{An order-$N$ c-MPS $T$.}
        \label{fig:cmps}
    \end{subfigure}
    
    \begin{subfigure}[t]{0.49\textwidth}
    \centering
        \includegraphics[width=\textwidth]{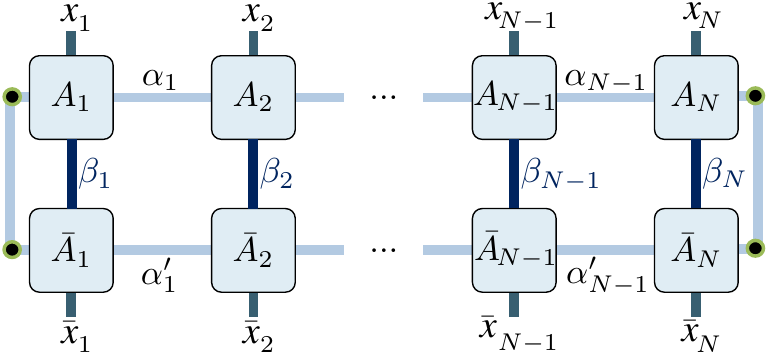}
        \caption{An order-$N$ LPS $T$}
        \label{fig:lps}
    \end{subfigure}
    \hfill
    \begin{subfigure}[t]{0.49\textwidth}
    \centering
        \includegraphics[width=\textwidth]{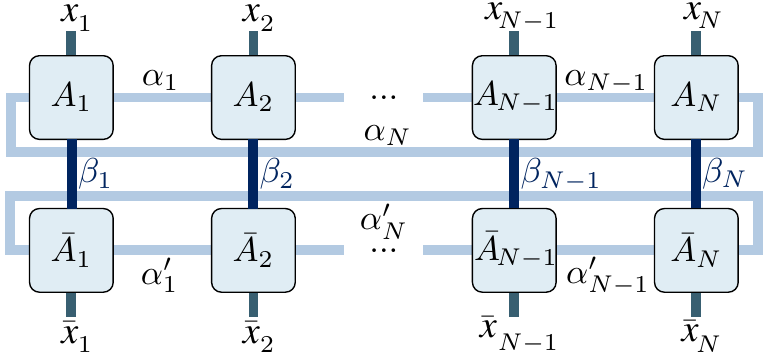}
        \caption{An order-$N$ c-LPS $T$}
        \label{fig:clps}
    \end{subfigure}
        \vspace{-.1in}
    \caption{Tensor diagrams corresponding to different tensor networks. Black end dots indicate boundary vectors.}\label{TNlabelfig}
    \vspace{-.1in}
\end{figure*}

Another class of tensor network that is studied in this paper is the Locally Purified State (LPS). An order-$N$ MPS with $d$-dimensional indices, admits an LPS representation of puri-rank $r$ and purification dimension $\mu$ when the entries of $T$ can be written as:
\begin{equation}\label{eq:lps}
	\begin{split}
		T_{x_1,\cdots,x_N} &= \sum_{\{\alpha_i,\alpha'_i\}_{i=0}^{N}=1}^r\sum_{\{\beta_i\}_{i=1}^N=1}^{\mu}A_0^{\alpha_0,\alpha'_0}A_{1,x_1}^{\beta_1,\alpha_0,\alpha_1}\overline{A_{1,x_1}^{\beta_1,\alpha'_0,\alpha'_1}}A_{2,x_2}^{\beta_2,\alpha_1,\alpha_2}\overline{A_{2,x_2}^{\beta_2,\alpha'_1,\alpha'_2}}\\ &\cdots A_{N-1,x_{N-1}}^{\beta_{N-1},\alpha_{N-2},\alpha_{N-1}}\overline{A_{N-1,x_{N-1}}^{\beta_{N-1},\alpha_{N-2},\alpha_{N-1}}} A_{N,x_{N}}^{\beta_N,\alpha_{N-1},\alpha_N}\overline{A_{N,x_{N}}^{\beta_N,\alpha'_{N-1},\alpha'_N}}A_{N+1}^{\alpha_N,\alpha'_N}
	\end{split}
\end{equation}
where $A_k, k\in\{0,N+1\}$, is an $r\times r$ matrix, where the element $(\alpha_k,\alpha'_k)$ is denoted as $A_{k}^{\alpha_k,\alpha'_k}$. Further, $A_k, k\in\{1,\cdots,N\}$, is an order-4 tensor of dimension $d\times \mu\times r\times r$, where the element $(x, \beta,\alpha_L,\alpha_R)$ is denoted as $A_{k,x}^{\beta,\alpha_L,\alpha_R}$, and elements belong to $\mathbb{R}$ or $\mathbb{C}$, as defined based on the context.

The tensor networks can be represented using tensor diagrams,  where tensors are represented by
boxes, and indices in the tensors are represented by lines emerging from the boxes. The lines connecting tensors between each other correspond to contracted indices, whereas lines that do not go from one tensor to another correspond to open indices \cite{orus2014practical}. The tensor diagrams corresponding to tensor networks MPS and LPS can be seen in Figure \eqref{fig:mps} and \eqref{fig:lps}, respectively.

\paragraph{Relation between HMMs and Tensor Networks: } As shown recently in \citep{Glasser2019,adhikary2021quantum}, tensor networks have direct correspondence with HMMs. In particular,  non-negative matrix product states (MPS) are HMMs \citep{Glasser2019}, and uniform locally purified states are QHMMs \citep{adhikary2021quantum}. Note that equivalence assumes that the tensor networks are normalized as the probabilities, while we will not explicitly normalize the tensor networks in the proofs, while will be accounted in the learning. 

\paragraph{Learning of HQMM: } Two state-of-the-art algorithms for learning HQMMs were proposed in \citep{srinivasan2018learning,adhikary2020expressiveness}. Both algorithms use an iterative maximum-likelihood algorithm to learn Kraus operators to model sequential data using an HQMM. The proposed algorithm in \citep{srinivasan2018learning} is slow and there is no theoretical gaurantee that the algorithm steps towards
the optimum at every iteration \citep{adhikary2020expressiveness}. The proposed algorithm in \citep{adhikary2020expressiveness}, however, uses a gradient-based algorithm. Although, the proposed algorithm in \citep{adhikary2020expressiveness} is able to learn an HQMM that outperforms the corresponding HMM, this comes
at the cost of a rapid scaling in the number of parameters. In order to deal with this issue, equivalence between HQMMs and Tensor Networks have been considered to achieve efficient learning \citep{Glasser2019,adhikary2021quantum}. 


\section{Proposed \texorpdfstring{\lowercase{c-}}HHQMM}

In this section, we propose Circular HQMM (c-HQMM) for modeling temporal data.

\begin{definition}[c-HQMM]\label{def:chqmm}
An $N$-horizon $d$-dimensional circular Hidden Quantum Markov Model (c-HQMM) with a set of discrete
observations $\mathcal{O}$ is a tuple $(\mathbb{C}^{d\times d}, \{K_{i,x,w_x}\},\textbf{\textrm{tr}}(\cdot))$, where the Kraus operators are given as $\{K_{i,x,w_x}\} \in \mathbb{C}^{d\times d}$, for all ${x\in \mathcal{O}},{i\in\{1,\cdots,N\}},w_x\in \mathbb{N}$. The full set of Kraus operators across all observables provide a quantum operation, i.e., $\sum_{i,x,w_x}K_{i,x,w_x}^\dagger K_{x,w_x}=I,$ for all $i\in\{1,\cdots,N\}$. 
The joint probability of a given sequence is given by:
\begin{equation}\label{eq:chqmm}
     p(x_1,\cdots,x_N)=\textbf{\textrm{tr}} \Bigg(\Bigg(\sum_{w_{x_N}}\Bar{K}_{N,x_N,w_{x_N}}\otimes K_{N,x_N,w_{x_N}}\Bigg)  \cdots\Bigg(\sum_{w_{x_1}}\Bar{K}_{1,x_1,w_{x_1}}\otimes K_{1,x_1,w_{x_1}}\Bigg)\Bigg)
\end{equation}
where $\textbf{\textrm{tr}}(\cdot)$ indicates the trace of the resulting matrix. This model is illustrated in Figure \eqref{fig:chqmm}.
\end{definition}

We note that this representation along with the algorithm proposed in  \cite[Algorithm 1]{srinivasan2018learning} can be used to show that any $d$ dimensional circular HMM can be simulated as an equivalent $d^2$ dimensional c-HQMM. 

\begin{figure*}[!ht]
\begin{subfigure}[t]{0.49\textwidth}
    \centering
    \includegraphics[width=\textwidth]{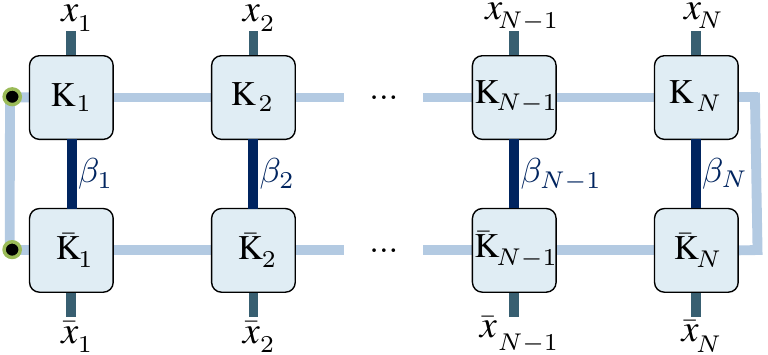}
    \caption{A Hidden Quantum Markov Model. The rightmost connecting line at the boundary represents the application of the identity. Black end dots indicate boundary vectors.}
    \label{fig:hqmm}
\end{subfigure}
\hfill
\begin{subfigure}[t]{0.49\textwidth}
        \includegraphics[width=\textwidth]{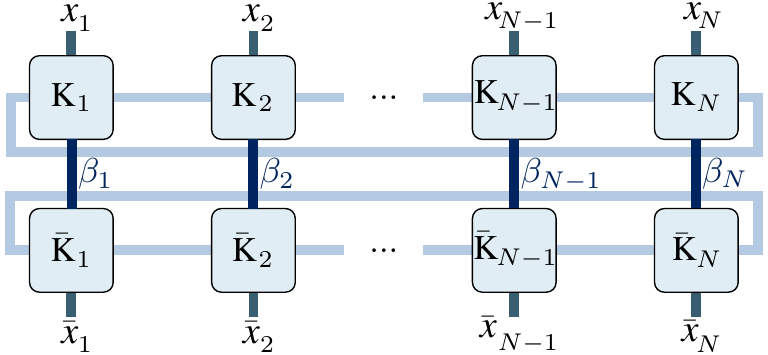}
    \caption{A Circular Hidden Quantum Markov Model.}
    \label{fig:chqmm}
\end{subfigure}
\caption{Hidden Quantum Markov Model and circular Hidden Quantum Markov Model: with observation variables ${x}_i$, where $K_i$ are $K_{i,x_i,w_{x_i}}$ and $\beta_i=|w_{x_i}|$ is determined by the Kraus-rank.}
\end{figure*}

\section{Proposed \texorpdfstring{\lowercase{c-}}-LPS Model}\label{sec:clps}

\if 0
Two main examples of tensor networks are tensor trains and tensor rings. Tensor trains or matrix product states (MPS) \citep{oseledets2011tensor,orus2014practical} have been used in data compression \citep{cichocki2016tensorpart1,cichocki2016tensorpart2}, quantum entanglement simulation of low and high-dimensional quantum systems \citep{ran2020tensor,montangero2018introduction}, among others.  Recently, tensor networks have been used for probabilistic modeling with applications from hidden Markov models to quantum machine learning \citep{Glasser2019}.

\paragraph{Tensor-Train/Matrix Product State (MPS):}

A tensor train $T$, with $N$ $d$-dimensional indices (as shown in Figure \eqref{fig:ttrain}) admits an MPS representation of rank $r$ (as shown in Figure \eqref{fig:mps}) when the entries of $T$ can be written as:
\begin{equation}\label{eq:mps}
	T_{X_1,\cdots,X_N} = \sum_{\{\alpha_i\}_{i=1}^N=1}^r A_{1,X_1}^{\alpha_1}A_{2,X_2}^{\alpha_1,\alpha_2}\cdots A_{N-1,X_{N-1}}^{\alpha_{N-2},\alpha_{N-1}}A_{N,X_{N}}^{\alpha_{N-1}}
\end{equation}
\fi

In this section, we propose circular LPS (c-LPS) which is an extension of LPS. For this purpose, we first briefly review circular MPS (c-MPS) model. Circular MPS (c-MPS) is an extension of MPS, where an order-$N$ c-MPS  $T$, with  $d$-dimensional indices and  rank $r$ has the entries given as:
\begin{equation}\label{eq:c-MPS}
    T_{x_1,\cdots,x_N} =\sum_{\{\alpha_i\}_{i=1}^N=1}^r A_{1,x_1}^{\alpha_{N},\alpha_1}A_{2,x_2}^{\alpha_1,\alpha_2}\cdots A_{N,x_{N}}^{\alpha_{N-1},\alpha_{N}}
\end{equation}
where $A_k, k\in \{1,\cdots,N\}$, is an order-3 tensors of dimension $d\times r\times r$, as shown in Figure \eqref{fig:cmps}, where element $(x,\alpha_L,\alpha_R)$ is denoted as  $A_{k,x}^{\alpha_{L},\alpha_R}$.
c-MPS are studied in the literature as tensor rings. Tensor rings \citep{zhao2016tensor,mickelin2020algorithms} have found application in compression of
convolutional neural networks \citep{wang2018wide}, image and video compression \citep{zhao2019learning}, data completion \citep{wang2017efficient}, among others.

Now, we introduce circular LPS ( c-LPS) as a tensor ring extension of LPS, where an order-$N$  with $d$-dimensional indices,  puri-rank $r$, and purification dimension $\mu$ has  entries given as:
\begin{equation}\label{eq:clps}
		T_{x_1,\cdots,x_N} =\sum_{\{\alpha_i,\alpha'_i\}_{i=1}^N=1}^r\sum_{\{\beta_i\}_{i=1}^N=1}^{\mu} A_{1,x_1}^{\beta_1,\alpha_{N},\alpha_1}\overline{A_{1,x_1}^{\beta_1,\alpha'_{N},\alpha'_1}}\cdots A_{N,x_{N}}^{\beta_N,\alpha_{N-1},\alpha_{N}}\overline{A_{N,x_{N}}^{\beta_N,\alpha'_{N-1},\alpha'_{N}}}
\end{equation}
where $A_k, k\in\{1,\cdots,N\}$, is an order-4 tensor of dimension $d\times \mu\times r\times r$, as shown in Figure \eqref{fig:clps}, where the element $(x, \beta,\alpha_L,\alpha_R)$ is denoted as $A_{k,x}^{\beta,\alpha_L,\alpha_R}$.\footnote{Since Born machines (BMs) are LPS of
purification dimension $\mu=1$, we can similarly define circular BMs.} 

\section{\texorpdfstring{\lowercase{c-}}-HQMM are \texorpdfstring{\lowercase{c-}}-LPS with Positive Semi-Definite Matrix Structure}\label{sec:relationship}

We first note that non-negative MPS (denoting by MPS$_{\mathbb{R}_{\ge 0}}$) are HMM \citep{Glasser2019}. In other words, any HMM can be mapped to an MPS with non-negative elements, and any MPS$_{\mathbb{R}_{\ge 0}}$ can be mapped to a HMM. Similarly, local quantum circuits with ancillas are locally purified states \citep{Glasser2019}. The authors of \cite{adhikary2021quantum} recently considered an infinite time model of HQMM, where  Kraus operators do not depend on time, and showed the equivalence of these HQMMs with the uniform LPS with a positive definite matrix structure. However, our work considers a non-uniform finite-time structure by having Kraus operators depend on time. We note that the equivalent tensor structure corresponding to c-HMM and c-HQMM are open, which is studied in this section. 

\if 0

In this section, we prove that c-LPS restricted to positive semi-definite matrices (PSD) are HQMMs. 
As shown in ,  This means that  In fact, the relationship between tensor networks, hidden Markov models, and quantum circuits can be summarized as follows:
Non-negative MPS are HMM, local quantum circuits are Born machines, and local quantum circuits with ancillas are locally purified states \citep{Glasser2019}.

\fi 

The next result describes the relation between c-HQMM and c-LPS:
\begin{theorem}\label{thm:chmmclps}
c-HQMM model is equivalent to a c-LPS structure where the decomposition entries $A_{i,x}^{b,a_1,a_2}$ are complex, and the $r\times r$ matrices formed by  $A_{i,x}^{b,\cdot,\cdot}$ for all $i,x,b$ are positive semi-definite (p.s.d.). 
\end{theorem}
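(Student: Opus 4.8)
The plan is to establish the equivalence by showing both directions of a correspondence between the Kraus-operator representation of a c-HQMM (equation \eqref{eq:chqmm}) and the c-LPS tensor decomposition (equation \eqref{eq:clps}), where the key structural constraint is that the $r \times r$ matrices $A_{i,x}^{b,\cdot,\cdot}$ are positive semi-definite. The central observation that I would exploit is that the c-HQMM joint probability is built from terms of the form $\sum_{w_x} \bar{K}_{i,x,w_x} \otimes K_{i,x,w_x}$, and that a Kronecker product $\bar{K} \otimes K$ is precisely the vectorized action of the map $\rho \mapsto K\rho K^\dagger$. The factor $A_{i,x}^{b,\alpha_L,\alpha_R} \overline{A_{i,x}^{b,\alpha'_L,\alpha'_R}}$ appearing in the c-LPS, once summed over the purification index $\beta = b$, has exactly the same algebraic shape, with the purification index $b$ playing the role of the Kraus index $w_x$ and the bond indices $(\alpha,\alpha')$ playing the role of the tensored $d \times d$ matrix space.

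\emph{Forward direction (c-HQMM $\Rightarrow$ p.s.d.\ c-LPS).} First I would identify the puri-rank and purification dimension: set $r = d$ so the bond indices $\alpha_i,\alpha'_i$ range over the $d$-dimensional Hilbert space, and set $\mu$ equal to the Kraus rank (the range of $w_x$). Then I would define the c-LPS core tensor by reading off the entries of the Kraus operators directly, i.e.\ $A_{i,x}^{b,\alpha_L,\alpha_R} := (K_{i,x,b})_{\alpha_R,\alpha_L}$ (up to an index convention to be pinned down). With this assignment, the conjugate-paired product $A_{i,x}^{b,\alpha_L,\alpha_R}\overline{A_{i,x}^{b,\alpha'_L,\alpha'_R}}$ reproduces the entries of $\bar{K}_{i,x,b} \otimes K_{i,x,b}$, and the c-LPS contraction \eqref{eq:clps}, which closes the bond indices in a ring via the shared index $\alpha_N$, matches the trace in \eqref{eq:chqmm}. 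I would then verify that the $r\times r$ matrix formed by $A_{i,x}^{b,\cdot,\cdot}$ for fixed $i,x,b$ is automatically p.s.d.: for a single purification slice this matrix is $K_{i,x,b}^T \overline{K_{i,x,b}}$ (a Gram-type matrix), which is p.s.d.\ by construction.

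\emph{Reverse direction (p.s.d.\ c-LPS $\Rightarrow$ c-HQMM).} Given a c-LPS whose slices $M := A_{i,x}^{b,\cdot,\cdot} \in \mathbb{C}^{r\times r}$ are p.s.d., I would invoke the spectral/Cholesky factorization $M = K^\dagger K$ to extract Kraus operators $K_{i,x,b}$ from each slice, then reassemble them so that $\sum_{b} \bar{K}_{i,x,b}\otimes K_{i,x,b}$ matches the contracted c-LPS transfer operator at site $i$. The trace closure of the ring in \eqref{eq:clps} corresponds exactly to the $\textbf{\textrm{tr}}(\cdot)$ in the c-HQMM joint probability, so the two distributions coincide. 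The normalization condition $\sum_{x,w_x} K_{i,x,w_x}^\dagger K_{i,x,w_x} = I$ is the one remaining piece; since the equivalence in this paper is stated up to normalization (as noted in the remark that ``we will not explicitly normalize the tensor networks in the proofs''), I would handle it by rescaling the tensors and absorbing the normalization into the learning step rather than enforcing it structurally.

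\emph{The hard part} will be getting the index bookkeeping exactly right so that the ring contraction of the c-LPS (which glues $\alpha_N$ to $\alpha_1$ cyclically, unlike the open-boundary LPS) genuinely produces a \emph{trace} rather than a vector sandwich, and confirming that the p.s.d.\ constraint on the per-$b$ slices is both necessary and sufficient — that is, that an arbitrary complex c-LPS without the p.s.d.\ restriction need \emph{not} arise from any c-HQMM. I expect the cleanest way to organize this is to treat each transfer operator $\mathcal{T}_{i,x} = \sum_b A_{i,x}^{b,\cdot,\cdot} \otimes \overline{A_{i,x}^{b,\cdot,\cdot}}$ as the fundamental object, show it equals a completely positive map's vectorization iff the summands are p.s.d., and then note that the circular (trace) contraction is preserved under the correspondence. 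This mirrors the open-boundary equivalence between HQMM and LPS, with the only genuinely new ingredient being the replacement of the boundary vectors $\vec{I}^T,\vec{\rho_0}$ by the cyclic trace, which I would argue does not interact with the p.s.d.\ slice condition at all.
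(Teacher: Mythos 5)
Your high-level skeleton matches the paper's: treat each site's purification-contracted tensor as an operator $\sum_{b}\bar{B}_{x,b}\otimes B_{x,b}$, identify the purification index with the Kraus index (this is the content of the paper's Lemmas \ref{lem:tensor2kraus} and \ref{lem:Kraus2tensor}), and note that the ring contraction realizes the trace in \eqref{eq:chqmm}. The genuine gap is in your reverse direction. A c-HQMM is not merely a collection of operators of this form: Definition \ref{def:chqmm} demands the quantum-operation constraint $\sum_{x,w_x}K_{i,x,w_x}^\dagger K_{i,x,w_x}=I$ at every site. You wave this off as ``normalization \dots absorbed into the learning step,'' but the paper's remark about unnormalized tensor networks refers only to the global scale of the distribution; $\sum_{x,w}K^\dagger K=I$ is a per-site structural condition, and no scalar rescaling of a generic c-LPS produces it. This is exactly where the paper does its nontrivial work (Lemma \ref{lem:tracepreserving}): rescale the transfer operator $\tau=\sum_x\tau_x$ by its leading eigenvalue, invoke the quantum Perron--Frobenius theorem to obtain a full-rank fixed point $\sigma_*$ of the adjoint map, and replace $B_{x,b}$ by $\sigma_*^{1/2}B_{x,b}\sigma_*^{-1/2}$. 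This similarity transformation leaves the circular (trace) contraction, and hence the modeled distribution, unchanged, while making the transfer operator trace preserving, so the resulting operators genuinely satisfy the c-HQMM constraint. Without some such argument, the operators you ``reassemble'' from a c-LPS are not a c-HQMM, and that direction of the equivalence is not established.

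Two of your supporting claims are also incorrect and would need repair even in a self-contained write-up. First, in the forward direction, the slice $A_{i,x}^{b,\cdot,\cdot}$ you construct \emph{is} the Kraus operator $K_{i,x,b}$ (up to transposition), not the Gram matrix $K_{i,x,b}^T\overline{K_{i,x,b}}$; a generic Kraus operator is an arbitrary complex matrix, so p.s.d.-ness of the slices is not ``automatic by construction.'' Second, in the reverse direction, factoring a p.s.d.\ slice as $M=K^\dagger K$ and promoting $K$ to a Kraus operator changes the model, since $\bar{M}\otimes M\neq\bar{K}\otimes K$ in general; the correct move (and the paper's, reading Lemma \ref{lem:Kraus2tensor} backwards) is to use the slices themselves as the Kraus operators, after applying the trace-preserving transformation above. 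Relatedly, your proposed criterion that $\sum_b\bar{B}_b\otimes B_b$ ``equals a completely positive map's vectorization iff the summands are p.s.d.''\ is false: the map $\rho\mapsto\sum_b B_b\rho B_b^\dagger$ is completely positive for arbitrary complex $B_b$, so the p.s.d.\ condition cannot be characterized this way.
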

\if 0
Now, we show the following  relationship between circular LPS with positive semi-definite elements and circular hidden quantum Markov models:
\begin{tcolorbox}[enhanced,frame style image=blueshade.png,
  opacityback=0.75,opacitybacktitle=0.25,
  colback=blue!5!white,colframe=blue!75!black]
  \begin{center}
      \textbf{Circular LPS with positive semi-definite elements and circular hidden quantum Markov models.}
  \end{center}
\end{tcolorbox}
\fi 

We need the following lemmas to prove the theorem.  The first lemma shows that both c-HQMM and c-LPS models have the same form of operators and a c-LPS operator can be mapped to a c-HQMM operator under certain conditions.

\begin{lemma}\label{lem:tensor2kraus}
Consider a c-LPS of order-$N$  with $d$-dimensional indices,  puri-rank $r$, and purification dimension $\mu$ as defined in section 4, where the decomposition entries $A_{i,x}^{b,a_1,a_2}$ are complex, and the $r\times r$ matrices formed by  $A_{i,x}^{b,\cdot,\cdot}$ for all $i,x\in X_i,b\in\beta_i, i\in\{1,\cdots,N\}$ are positive semi-definite (p.s.d.). So, c-LPS has operators of the form $\tau_{x}=\sum_{\beta_i=1}^\mu\Bar{B}_{x,\beta_i}\otimes B_{x,\beta_i}$, where $\{B_{x,\beta_i}\}_{x\in X_i}\in \mathbb{C}^{r\times r}$. 
\end{lemma}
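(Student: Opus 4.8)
The plan is to read the circular contraction in \eqref{eq:clps} as a trace of a product of ``doubled'' transfer operators, one per site, and to match each such operator with the Kronecker form $\sum_{\beta}\bar{B}\otimes B$ that appears in the c-HQMM joint probability \eqref{eq:chqmm}. The purification index $\beta_i$ of the c-LPS will play the role of the Kraus index $w_{x_i}$ of the c-HQMM, and the slices of the decomposition tensors will supply the matrices $B_{x,\beta_i}$.

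First I would introduce the doubled bond space with composite index $(\alpha,\alpha')\in\{1,\dots,r\}^2$ and, for each site $k$ and symbol $x$, define the $r^2\times r^2$ operator $\tau^{(k)}_x$ by
\[
\big[\tau^{(k)}_x\big]_{(\alpha_{k-1},\alpha'_{k-1}),\,(\alpha_k,\alpha'_k)}=\sum_{\beta_k=1}^{\mu}A_{k,x}^{\beta_k,\alpha_{k-1},\alpha_k}\,\overline{A_{k,x}^{\beta_k,\alpha'_{k-1},\alpha'_k}},
\]
using the circular convention $\alpha_0=\alpha_N$, $\alpha'_0=\alpha'_N$. Substituting this grouping into \eqref{eq:clps}, the leftover sums over the ring indices $\{\alpha_i,\alpha'_i\}$ are exactly the contractions of an ordered operator product closed into a loop, so $T_{x_1,\dots,x_N}=\operatorname{tr}\!\big(\tau^{(N)}_{x_N}\cdots\tau^{(1)}_{x_1}\big)$, which is structurally the trace-of-a-product appearing in \eqref{eq:chqmm}, with $\tau^{(k)}_x$ standing in for $\sum_{w_x}\bar{K}_{k,x,w_x}\otimes K_{k,x,w_x}$.

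Second I would exhibit the Kronecker form of a single $\tau^{(k)}_x$. Defining the matrices $B_{x,\beta}\in\mathbb{C}^{r\times r}$ from the slices by $\big[B_{x,\beta}\big]_{\alpha,\gamma}=\overline{A_{k,x}^{\beta,\alpha,\gamma}}$ and using the Kronecker convention $\big[\bar{B}\otimes B\big]_{(\alpha,\alpha'),(\gamma,\gamma')}=\overline{B_{\alpha\gamma}}\,B_{\alpha'\gamma'}$, a termwise comparison of entries gives
\[
\tau^{(k)}_x=\sum_{\beta=1}^{\mu}\bar{B}_{x,\beta}\otimes B_{x,\beta},
\]
which is exactly the claimed form. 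I would emphasize that this identification holds for the slices of \emph{any} c-LPS, so the positive-semidefiniteness hypothesis does not enter the operator form itself; however, since each $B_{x,\beta}$ is an entrywise conjugate of the slice $A_{k,x}^{\beta,\cdot,\cdot}$ and entrywise conjugation preserves positive semidefiniteness, the p.s.d.\ hypothesis passes verbatim to $\{B_{x,\beta}\}$, supplying the structure that the subsequent step uses to realize a genuine c-HQMM quantum operation.

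The computation is essentially bookkeeping, so the delicate points are conventions rather than ideas. The step that genuinely needs care is the ring-closure argument: I must check that removing the boundary vectors $A_0,A_{N+1}$ of the ordinary LPS \eqref{eq:lps} and identifying $\alpha_0$ with $\alpha_N$ (and $\alpha'_0$ with $\alpha'_N$) really converts the open matrix-product contraction into an honest trace over the doubled space, so the result matches the $\operatorname{tr}(\cdot)$ of \eqref{eq:chqmm} rather than the $\vec{I}^{\,T}(\cdots)\vec{\rho_0}$ form of the non-circular HQMM in \eqref{eq:hqmm}. Pinning down the placement of the complex conjugate, together with any transpose forced by the product ordering in \eqref{eq:chqmm}, is the other easy-to-miscount detail; since the overall tensor is a real probability an accidental global conjugation is harmless, but I would fix the convention so that the bars sit on the same factors as in \eqref{eq:chqmm}.
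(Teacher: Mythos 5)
Your proposal is correct and follows essentially the same route as the paper's proof: group each site's tensor $A_{k,x}$ with its conjugate, contract over the purification index $\beta$, and reshape the resulting order-4 object into an $r^2\times r^2$ operator identified entrywise with $\sum_{\beta}\bar{B}_{x,\beta}\otimes B_{x,\beta}$, where the $B_{x,\beta}$ are the (conjugated) matrix slices $A_{k,x}^{\beta,\cdot,\cdot}$. Your version simply makes explicit the index conventions, the trace closure of the ring, and the observation that the p.s.d.\ hypothesis is not needed for the operator form itself, all of which the paper leaves implicit.
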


\begin{proof}
The given c-LPS, as defined in section 4, has entries of the form:
\begin{equation*}
	\begin{split}
		T_{x_1,\cdots,x_N} &=\sum_{\{\alpha_i,\alpha'_i\}_{i=1}^N=1}^r\sum_{\{\beta_i\}_{i=1}^N=1}^{\mu} A_{1,x_1}^{\beta_1,\alpha_{N},\alpha_1}\overline{A_{1,x_1}^{\beta_1,\alpha'_{N},\alpha'_1}}\cdots A_{N,x_{N}}^{\beta_N,\alpha_{N-1},\alpha_{N}}\overline{A_{N,x_{N}}^{\beta_N,\alpha'_{N-1},\alpha'_{N}}}
	\end{split}
\end{equation*}
For each $x_i\in\{1,\cdots,N\}$, the index contraction of $A_i$ and $\Bar{A}_i$ over $\beta_i\in\{1,\cdots,N\}$ results in an order-4 tensor of dimension $r\times r\times r\times r$. This 4-dimension array can be reshaped as an $r^2\times r^2$ matrix and rewritten as $\tau_{x_i}=\sum_{\beta_i=1}^\mu\Bar{B}_{x_i,\beta_i}\otimes B_{x_i,\beta_i}$, where $\{B_{x_i,\beta_i}\}_{x_i=1}^d\in \mathbb{C}^{r\times r}$. Because for the fixed values $x_i\in X_i$ and $\beta_i$, tensors $A_i, \Bar{A_i}$ can be reshaped and rewritten as $r\times r$ matrices $B_{x_i,\beta_i}, \Bar{B}_{x_i,\beta_i}$, respectively.
So, c-LPS has operators of the form defined as follows:

\centering\begin{tcolorbox}[hbox]
$
\begin {aligned}
\raisebox{-17mm}{\includegraphics{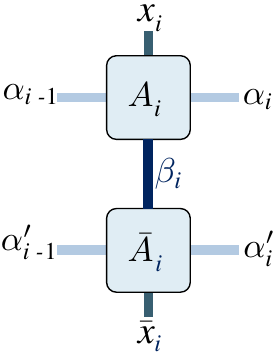}}
\qquad = \qquad
\sum_{\beta_i=1}^\mu\Bar{B}_{x_i,\beta_i}\otimes B_{x_i,\beta_i}
\end {aligned}
$
\end{tcolorbox}
\end{proof}

The following lemma shows that a c-HQMM operator can be mapped to a c-LPS operator.
\begin{lemma}\label{lem:Kraus2tensor}
Kraus operators of a c-HQMM model of rank $\beta$ can be mapped to operators of a c-LPS model of the same puri-rank. 
\end{lemma}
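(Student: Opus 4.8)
The plan is to run the construction of Lemma~\ref{lem:tensor2kraus} in reverse: rather than reshaping tensor cores into Kraus-type operators, I would read the Kraus operators directly as the cores of a c-LPS. Fix a c-HQMM of Kraus rank $\beta$ with operators $K_{i,x,w_x}\in\mathbb{C}^{d\times d}$ for $i\in\{1,\dots,N\}$, $x\in\mathcal{O}$, and $w_x\in\{1,\dots,\beta\}$. Its site transfer operator appearing in \eqref{eq:chqmm} is $\tau_x^{\mathrm{HQMM}}=\sum_{w_x}\bar K_{i,x,w_x}\otimes K_{i,x,w_x}$, an $r^2\times r^2$ matrix with $r=d$. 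By Lemma~\ref{lem:tensor2kraus} every c-LPS operator has exactly this $\sum_{\beta_i}\bar B_{x,\beta_i}\otimes B_{x,\beta_i}$ form, so it suffices to produce order-4 tensors whose induced operator is $\tau_x^{\mathrm{HQMM}}$.

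First I would set the purification dimension of the target c-LPS equal to the Kraus rank, identifying the purification index $\beta_i$ with the Kraus index $w_x$, and the puri-rank equal to the Kraus dimension $d$. I then define the cores entrywise by $A_{i,x}^{\beta_i,\alpha_L,\alpha_R}:=(K_{i,x,w_x})_{\alpha_L,\alpha_R}$ with $\beta_i\equiv w_x$. Fixing $x$ and $\beta_i$ and reading off the $(\alpha_L,\alpha_R)$ block then returns the matrix $B_{x,\beta_i}=K_{i,x,w_x}$ together with its conjugate $\bar B_{x,\beta_i}=\bar K_{i,x,w_x}$, so the induced c-LPS operator $\sum_{\beta_i}\bar B_{x,\beta_i}\otimes B_{x,\beta_i}$ equals $\tau_x^{\mathrm{HQMM}}$ term by term. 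This already establishes the operator-level claim and shows the resulting c-LPS has the same rank $\beta$.

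To conclude that the two models describe the same object globally, I would observe that both \eqref{eq:chqmm} and the ring contraction \eqref{eq:clps} are the \emph{same} circular contraction of these identical site operators: closing the bond leg $\alpha_N$ (and its conjugate $\alpha'_N$) of the first core against the last core is exactly the trace of the ordered product of the $\tau_x$'s, which is the Kronecker/vectorized form of $\mathrm{tr}(\cdot)$ in \eqref{eq:chqmm}. Since the per-site operators agree and the boundary contraction is identical, the joint distributions coincide. Note that no positive-semidefiniteness hypothesis is needed in this direction, since every summand $\bar K\otimes K$ is automatically of the doubled form; p.s.d.\ structure is what must be \emph{recovered} in the converse direction handled by Lemma~\ref{lem:tensor2kraus}, and together the two lemmas yield Theorem~\ref{thm:chmmclps}.

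I expect the one genuine subtlety to be bookkeeping of index conventions rather than any substantive difficulty: one must keep the column-first vectorization consistent so that $\mathrm{vec}(K\rho K^\dagger)=(\bar K\otimes K)\mathrm{vec}(\rho)$ is matched by how the primed/unprimed bond legs $(\alpha_i,\alpha'_i)$ of the ring pair up, and confirm that the ordering (and any resulting transpose) of the product $\tau_N\cdots\tau_1$ in \eqref{eq:chqmm} is reproduced by the left-to-right core ordering in \eqref{eq:clps} up to the cyclicity of the trace. These are convention checks, not obstacles to the mapping itself.
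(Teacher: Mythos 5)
Your proposal is correct and follows essentially the same route as the paper's own proof: both identify the purification index with the Kraus index (setting $\mu=\beta$, so the Kraus-rank plays the role of the purification dimension) and take the Kraus operators $K_{i,x,w_x}$ themselves, for fixed $x$ and $w_x$, as the $d\times d$ matrices forming the order-4 c-LPS cores $A_i$ (with $\bar K_{i,x,w_x}$ giving $\bar A_i$). Your extra step verifying that the ring contraction of \eqref{eq:clps} reproduces the trace in \eqref{eq:chqmm}, together with the vectorization and ordering caveats, is a more careful elaboration of what the paper leaves implicit, but it is the same construction.
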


\begin{proof}
As shown in the proof of Lemma \ref{lem:tensor2kraus}, we know that both models have operators of the same
form.
Now, we map $\sum_{x_i,w_{x_i}}\Bar{K}_{i,x_i,w_{x_i}}\otimes K_{i,x_i,w_{x_i}}, i\in\{1,.\cdots,N\}$ to tensors $A_i, \Bar{A_i}, i\in\{1,.\cdots,N\}$ in an c-LPS. For this purpose, we set $\beta_i=|w_{x_i}|$, i.e., the Kraus-rank in c-HQMM plays the role of purification dimension in c-LPS. For fixed values $x_i$ and $w_{x_i}$, $K_{i,x_i,w_{x_i}}$ and $\Bar{K}_{i,x_i,w_{x_i}}$ are $d\times d$ matrices that play the role of tensors $A_i$ and $\Bar{A}_i$ in the c-LPS, respectively.
\end{proof}

The following lemma shows that the c-LPS transfer operator can be rescaled and transformed into a trace-preserving map.
\begin{lemma}\label{lem:tracepreserving}
The c-LPS transfer operator, i.e., $\tau=\sum_{x=1}^d\tau_{x}$ can be rescaled and similarity transformed into a trace preserving map, where $\tau_{x}=\sum_{\beta_i=1}^\mu\Bar{B}_{x,\beta_i}\otimes B_{x,\beta_i}, x\in X_i, i\in\{1,\cdots,N\}$.
\end{lemma}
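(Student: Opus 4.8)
The plan is to recognize the transfer operator as the vectorized form of a completely positive (CP) map and then to exploit the gauge freedom of the tensor network to normalize it. By the identity $\mathrm{vec}(B\rho B^\dagger)=(\overline{B}\otimes B)\,\mathrm{vec}(\rho)$, the operator $\tau=\sum_{x}\tau_x=\sum_{x,\beta_i}\overline{B}_{x,\beta_i}\otimes B_{x,\beta_i}$ is precisely the matrix acting on vectorized states that represents the CP map $\mathcal{E}(\rho)=\sum_{x,\beta_i}B_{x,\beta_i}\,\rho\,B_{x,\beta_i}^\dagger$. A c-HQMM requires its operators to furnish a trace-preserving operation, i.e. $\sum_{x,\beta_i}B_{x,\beta_i}^\dagger B_{x,\beta_i}=I$ (the constraint in Definition \ref{def:chqmm}), which is exactly the statement that $\mathcal{E}$ is trace preserving. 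My goal is therefore to produce a positive rescaling and an invertible matrix $S$ so that the gauge-transformed operators $\widetilde{B}_{x,\beta_i}=\lambda^{-1/2}\,S\,B_{x,\beta_i}\,S^{-1}$ satisfy this identity.

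First I would invoke the Perron--Frobenius theory for positive maps (the Evans--H\o{}egh-Krohn theorem): since $\mathcal{E}$ is a nonzero CP map, its spectral radius $\lambda>0$ is an eigenvalue, and the adjoint map $\mathcal{E}^{*}(Y)=\sum_{x,\beta_i}B_{x,\beta_i}^\dagger\,Y\,B_{x,\beta_i}$ admits a positive semi-definite fixed point $X\succeq 0$ obeying $\mathcal{E}^{*}(X)=\lambda X$, that is $\sum_{x,\beta_i}B_{x,\beta_i}^\dagger X B_{x,\beta_i}=\lambda X$. Assuming $X$ is positive definite (hence invertible), I set $S=X^{1/2}$ and define $\widetilde{B}_{x,\beta_i}=\lambda^{-1/2}X^{1/2}B_{x,\beta_i}X^{-1/2}$. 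A one-line computation then gives $\sum_{x,\beta_i}\widetilde{B}_{x,\beta_i}^\dagger\widetilde{B}_{x,\beta_i}=\lambda^{-1}X^{-1/2}\big(\sum_{x,\beta_i}B_{x,\beta_i}^\dagger X B_{x,\beta_i}\big)X^{-1/2}=\lambda^{-1}X^{-1/2}(\lambda X)X^{-1/2}=I$, so the transformed map is trace preserving.

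It remains to check that this amounts to a rescaling together with a similarity transformation at the level of $\tau$. Because complex conjugation is multiplicative, $\overline{SBS^{-1}}=\overline{S}\,\overline{B}\,\overline{S}^{-1}$, and the mixed-product property of the Kronecker product turns the replacement $B_{x,\beta_i}\mapsto \lambda^{-1/2}S B_{x,\beta_i}S^{-1}$ into the conjugation $\tau\mapsto \lambda^{-1}(\overline{S}\otimes S)\,\tau\,(\overline{S}\otimes S)^{-1}$, i.e. an overall rescaling by $\lambda^{-1}$ followed by the similarity transformation generated by the invertible matrix $\overline{S}\otimes S$ with $S=X^{1/2}$, exactly as claimed. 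The main obstacle is the invertibility of the fixed point $X$: Perron--Frobenius only guarantees $X\succeq 0$, and $X$ is strictly positive precisely when $\mathcal{E}$ is irreducible (primitive). When $X$ is singular I would restrict $\mathcal{E}$ to the support of $X$ — equivalently pass to the block on which the leading eigenvector lives — which lowers the effective puri-rank but leaves the construction intact; handling this degenerate case cleanly is the part of the argument that requires the most care.
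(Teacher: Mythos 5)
Your proof is correct and follows essentially the same route as the paper's: rescale the operators by the square root of the leading eigenvalue, then conjugate by the square root of the positive fixed point of the adjoint map supplied by quantum Perron--Frobenius, which is exactly the paper's construction $B'_{x_i,\beta_i}=\sigma_*^{1/2}B_{x_i,\beta_i}\sigma_*^{-1/2}$ verified against $\tau''^{\dagger}\Vec{I}=\Vec{I}$. If anything, you are more careful than the paper on the one delicate point: the paper assumes ``without loss of generality'' a spectral gap and asserts a full-rank fixed point outright, whereas you correctly note that strict positivity of the fixed point requires irreducibility of the map and sketch the restriction-to-support fix for the degenerate case.
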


\begin{proof}
To show that the c-LPS transfer operator is trace preserving is equivalent to show that the identitiy $\Vec{I}$ is the fixed point of $\tau^\dagger$, i.e., $\tau^\dagger\Vec{I}=\Vec{I}$. Assume that $\tau$ is not trace preserving. Without loss of generality assume that the two eigenvalues of $\tau$ with greatest magnitude, $\lambda_1, \lambda_2$, satisfy $|\lambda_1|>|\lambda_2|$. If we replace $B_{x_i,\beta_i},x_i\in \{1,\cdots,d\}$ with $B_{x_i,\beta_i}/\sqrt{\lambda_1}$, we obtain a transfer operator $\tau'$ that enjoys the leading eigenvalue of magnitude 1. Note that this rescaling leaves the joint probability distributions unchanged. In this case,
the quantum Perron-Frobenius theorem \cite[chapter 16]{baez2018quantum} implies that $\lambda_1=1$ and  $\tau'^\dagger$ has a unique fixed-point operator $\Vec{\sigma}_*$ (i.e., $\tau'^\dagger\Vec{\sigma}_*=\Vec{\sigma}_*$) which is the vectorization of a full-rank (and so, invertible)
positive density matrix. 

Now, we replace $B_{x_i,\beta_i},x_i\in \{1,\cdots,d\}$ with $B'_{x_i,\beta_i}=\sigma_*^{1/2}B_{x_i,\beta_i}\sigma_*^{-1/2}$. We show that $\tau''=\sum_{x_i=1}^N\tau'_{x_i}$ is a trace preserving map. Note that since ${\sigma}_*, \sigma_*^{-1/2},$ and $\sigma_*^{1/2}$ are Hermitian, we have $\overline{{\sigma}_*}=({\sigma}_*)^T, \overline{{\sigma}_*^{-1/2}}=({\sigma}_*^{-1/2})^T$, and $\overline{{\sigma}_*^{1/2}}=({\sigma}_*^{1/2})^T$, respectively. For three matrices $X,Y,Z$ that $W=XYZ$, we have $(Z^T\otimes X)\Vec{Y}=\Vec{W}$. Using these properties, we have:

\begin{equation*}
    \begin{split}
       \tau^{''^\dagger}\Vec{I}&=\Bigg(\sum_{x_i=1}^d\tau_{x_i}^{'^\dagger}\Bigg)\Vec{I}= \Bigg(\sum_{x_i=1}^d \Bigg(\sum_{\beta_{i}=1}^\mu\Bar{B}_{x_i,\beta_i}^{'^\dagger}\otimes {B}_{x_i,\beta_i}^{'^\dagger}\Bigg)\Bigg)\Vec{I}\\
       &= \Bigg(\sum_{x_i=1}^d \Bigg(\sum_{\beta_i=1}^\mu(\sigma_*^{-1/2})^T\Bar{B}_{x_i,\beta_i}^\dagger(\sigma_*^{1/2})^T\otimes \sigma_*^{-1/2}B_{x_i,\beta_i}^\dagger\sigma_*^{1/2}\Bigg)\Bigg)\Vec{I}\\
       &= \Bigg((\sigma_*^{-1/2})^T\otimes\sigma_*^{-1/2}\Bigg) \Bigg(\sum_{x_i=1}^d\sum_{\beta_i=1}^\mu\Bar{B}_{x_i,\beta_i}^\dagger\otimes B_{x_i,\beta_i}^\dagger\Bigg)\Bigg((\sigma_*^{1/2})^T\otimes\sigma_*^{1/2}\Bigg)\Vec{I}=\Bigg((\sigma_*^{-1/2})^T\otimes\sigma_*^{-1/2}\Bigg)\tau_{x_i}^{'^\dagger}\Vec{\sigma}_*\\
       &= \Bigg((\sigma_*^{-1/2})^T\otimes\sigma_*^{-1/2}\Bigg)\Vec{\sigma}_*=\Vec{I}
    \end{split}
\end{equation*}

So, the transfer operator $\tau$ can be rescaled and similarity transformed into one that is trace-preserving.
\end{proof}

\begin{proof}[Proof of Theorem \ref{thm:chmmclps}]
In order to show the equivalence, we need to show that c-LPS can be mapped to a c-HQMM, and vice versa. 

First, we show that a c-LPS can be mapped to a c-HQMM. As shown in the proof of Lemma \ref{lem:tensor2kraus}, we know that both models have operators of the same form. Lemma \ref{lem:tracepreserving} shows that the transfer operators in a c-LPS can be rescaled and similarity transformed into one that is trace-preserving. So, a c-LPS of order-$N$  with $d$-dimensional indices,  puri-rank $r$, and purification dimension $\mu$ can be rescaled and similarity transformed into a c-HQMM of $N$-horizon with $d$-dimensional and the Kraus-rank $\mu$.

Second, we show that a c-HQMM can be mapped to a c-LPS. Lemma \ref{lem:Kraus2tensor} shows that the transfer operators in a c-HQMM can be mapped into transfer operators in a c-LPS. So, a c-HQMM of $N$-horizon with $d$-dimensional and the Kraus-rank $\beta$ can be mapped into a c-LPS of order-$N$  with $d$-dimensional indices,  puri-rank $d$, and purification dimension $\beta$. 
\end{proof}

The proof structure can be directly specialized to HQMM, where we can obtain the following result:
\begin{lemma}\label{lem:HQMMvsLPS}
HQMM model is equivalent to a LPS structure where the decomposition entries $A_{i,x}^{b,a_1,a_2}$ are complex, and the $r\times r$ matrices formed by  $A_{i,x}^{b,\cdot,\cdot}$ for all $i,x,b$ are positive semi-definite (p.s.d.) for $i\in \{1, \cdots, N\}$. Further, $r\times r$ matrix $A_0$ is p.s.d.  and $r\times r$ matrix $A_{N+1}$ is the identity matrix.   
\end{lemma}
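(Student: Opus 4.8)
The plan is to reuse the three-lemma machinery behind Theorem~\ref{thm:chmmclps} and supply the one ingredient that the circular case did not require, namely the explicit boundary tensors $A_0$ and $A_{N+1}$ that here replace the trace which closed the loop in \eqref{eq:chqmm}. First I would observe that the bulk factors are structurally identical in the two models: for each $i\in\{1,\cdots,N\}$ the HQMM factor $\sum_{w_{x_i}}K_{i,x_i,w_{x_i}}^\dagger\otimes K_{i,x_i,w_{x_i}}$ in \eqref{eq:hqmm} and the LPS factor obtained by contracting $A_{i,x_i}$ with its conjugate over $\beta_i$ in \eqref{eq:lps} both reduce to an operator of the form $\tau_{x_i}=\sum_{\beta_i}\bar B_{x_i,\beta_i}\otimes B_{x_i,\beta_i}$. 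This is exactly Lemmas~\ref{lem:tensor2kraus} and~\ref{lem:Kraus2tensor}, which carry over verbatim because reshaping a single tensor slice is insensitive to the boundaries; the p.s.d.\ hypothesis on the slices $A_{i,x}^{b,\cdot,\cdot}$ is what certifies that the reshaped operators correspond to a genuine Kraus family, and the identification $\beta_i=|w_{x_i}|$ matches purification dimension with Kraus rank.

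Next I would dispose of normalization exactly as in Lemma~\ref{lem:tracepreserving}: rescaling the $B$'s by the leading eigenvalue and conjugating by $\sigma_*^{1/2}$ turns the transfer operator into a trace-preserving map, so the induced Kraus operators satisfy the completeness relation $\sum_{x,w_x}K_{i,x,w_x}^\dagger K_{i,x,w_x}=I$ demanded by Definition~\ref{def:hqmm}. Crucially, this rescaling and similarity transform leave the represented tensor unchanged, so they cost nothing in the equivalence.

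The step specific to this lemma is the treatment of the two endpoints. Reading \eqref{eq:hqmm} against \eqref{eq:lps}, the HQMM probability is $\vec I^{\,T}\tau_N\cdots\tau_1\vec\rho_0$, whose left endpoint is the vectorized initial density matrix $\vec\rho_0$ and whose right endpoint is the vectorized identity $\vec I$ supplying the trace read-out. I would therefore match $A_0$ with $\vec\rho_0$, which is p.s.d.\ because $\rho_0$ is a density matrix, and $A_{N+1}$ with $\vec I$, i.e.\ $A_{N+1}=I$. Conversely an LPS satisfying the stated hypotheses hands back a p.s.d.\ matrix $A_0$ that can be normalized to an initial density matrix together with the identity read-out $A_{N+1}=I$, so both directions of the equivalence would follow.

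The main obstacle I anticipate is keeping these two boundary assignments consistent with the similarity transform used for normalization. In the circular proof cyclicity of the trace made the transform invisible, but here conjugating each $\tau_i$ by $P=(\sigma_*^{1/2})^T\otimes\sigma_*^{1/2}$ also dresses the endpoints, so the right read-out is no longer literally $\vec I$ and the left endpoint is no longer literally $\vec\rho_0$. I would handle this by noting that $A_{N+1}=I$ is nothing but the statement that $\vec I$ is a fixed point of the adjoint of the normalized transfer operator, which is precisely what Lemma~\ref{lem:tracepreserving} establishes, while the transformed left boundary $\sigma_*^{-1/2}\rho_0\,\sigma_*^{-1/2}$ remains p.s.d.\ because congruence preserves positivity and can then be rescaled to unit trace. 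Verifying that these two facts can be imposed simultaneously, rather than one undoing the other, is the delicate point that the circular argument sidesteps.
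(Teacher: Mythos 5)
Your proposal follows the paper's own route almost exactly: match the bulk operators via Lemmas~\ref{lem:tensor2kraus} and~\ref{lem:Kraus2tensor}, identify the boundaries as $A_0=\rho_0$ (p.s.d.\ since $\rho_0$ is a density matrix) and $A_{N+1}=I$ (the trace read-out $\vec{I}^{\,T}$), and invoke the rescaling and similarity transform of Lemma~\ref{lem:tracepreserving} for trace preservation. The one place where you go beyond the paper is also the place where your argument does not close: the paper never attempts to prove that the normalizing transform is compatible with the boundary assignments; it observes that the uniform-case argument of \citep{adhikary2021quantum} (the evaluation functional converging to $\vec{I}^{\,T}$) is unavailable at finite horizon, and then simply builds the restriction $A_{N+1}=I$ into the class of LPS under consideration (this is the content of Remark~\ref{rem1}).

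Your proposed reconciliation of the two boundary facts is not correct as stated, and you rightly flag it as the delicate point. Lemma~\ref{lem:tracepreserving} shows that the \emph{transformed} operators satisfy $\tau''^{\dagger}\vec{I}=\vec{I}$; it does not show that the dressed read-out of the original contraction is $\vec{I}^{\,T}$. Writing $P=(\sigma_*^{1/2})^T\otimes\sigma_*^{1/2}$, the contraction becomes $\vec{I}^{\,T}P^{-1}\bigl(P\tau_N P^{-1}\bigr)\cdots\bigl(P\tau_1 P^{-1}\bigr)P\vec{\rho}_0$, and $\vec{I}^{\,T}P^{-1}$ acts as $\vec{\rho}\mapsto\mathrm{tr}(\sigma_*^{-1}\rho)$, i.e., it is the read-out associated with $\sigma_*^{-1}$, not with $I$; undoing this by a further conjugation restores the read-out $\vec{I}^{\,T}$ but destroys trace preservation, which is exactly the ``one undoing the other'' you anticipate. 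To actually close the gap one should abandon the Perron--Frobenius fixed point and use a site-dependent gauge transform seeded at the right boundary: set $Y_N=I$ (this is where the hypothesis $A_{N+1}=I$ enters), define $Y_{i-1}=\sum_{x,\beta}B_{i,x,\beta}^{\dagger}Y_i B_{i,x,\beta}$, and replace $B_{i,x,\beta}$ by $\tilde{B}_{i,x,\beta}=Y_i^{1/2}B_{i,x,\beta}Y_{i-1}^{-1/2}$ (passing to supports if some $Y_i$ is singular). Then every site satisfies the Kraus completeness relation $\sum_{x,\beta}\tilde{B}_{i,x,\beta}^{\dagger}\tilde{B}_{i,x,\beta}=I$, the read-out stays exactly $\vec{I}^{\,T}$ because $Y_N=I$, and the left boundary becomes $Y_0^{1/2}A_0Y_0^{1/2}$, still p.s.d.\ and normalizable to a density matrix, so the represented tensor is unchanged up to the overall normalization the paper already disregards. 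With this replacement your argument is complete; note that the paper's own proof never carries out this step either, resolving the same difficulty by restriction rather than by argument.
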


\begin{proof}
The given LPS, as defined in section \ref{sec:related}, has entries of the form:
\begin{equation*}
	\begin{split}
		T_{x_1,\cdots,x_N} &= \sum_{\{\alpha_i,\alpha'_i\}_{i=0}^{N}=1}^r\sum_{\{\beta_i\}_{i=1}^N=1}^{\mu}A_0^{\alpha_0,\alpha'_0}A_{1,x_1}^{\beta_1,\alpha_0,\alpha_1}\overline{A_{1,x_1}^{\beta_1,\alpha'_0,\alpha'_1}}A_{2,x_2}^{\beta_2,\alpha_1,\alpha_2}\overline{A_{2,x_2}^{\beta_2,\alpha'_1,\alpha'_2}}\\ &\cdots A_{N-1,x_{N-1}}^{\beta_{N-1},\alpha_{N-2},\alpha_{N-1}}\overline{A_{N-1,x_{N-1}}^{\beta_{N-1},\alpha_{N-2},\alpha_{N-1}}} A_{N,x_{N}}^{\beta_N,\alpha_{N-1},\alpha_N}\overline{A_{N,x_{N}}^{\beta_N,\alpha'_{N-1},\alpha'_N}}A_{N+1}^{\alpha_N,\alpha'_N}
	\end{split}
\end{equation*}
Since both models have operators of the same
form, as we showed in the proof of Lemma \ref{lem:tensor2kraus}, we can write and manipulate the joint
probability of a sequence of $N$ observations as the unnormalized probability mass function
over $N$ discrete random variables $\{X_i\}_{i=1}^N$ as follows:
\begin{equation}
\begin{split}
    p(x_1,\cdots,x_n)&=\Vec{I}^T(\sum_{\beta_N=1}^\mu\Bar{B}_{x_N,\beta_N}\otimes B_{x_N,\beta_N})\cdots(\sum_{\beta_1=1}^\mu\Bar{B}_{x_1,\beta_1}\otimes B_{x_1,\beta_1})\Vec{\rho_0}
\end{split}
\end{equation}
where $\Vec{I}$ is the right boundary (also called evaluation functional) is the vectorized version of the identity matrix, i.e., $A_{N+1}=I$, and $\Vec{\rho_0}$ is the vectorized version of the initial state, i.e., $A_0=\rho_0$. So, 
LPS models and HQMMs differ only in two things: (1) While LPS can have an arbitrary Kraus-rank evaluation functional, HQMMs are restricted to the identity evaluation functional $\Vec{I}^T$ of full Kraus rank. (2) HQMMs operators are trace-preserving.

Following the same approach used in the proof of Lemma \ref{lem:tensor2kraus}, the transfer operator of an LPS can be rescaled and similarity transformed into one that
is trace-preserving. For the uniform LPS models, the evaluation functional of this transformed model will then converge to $\Vec{I}^T$ \citep{adhikary2021quantum}. However, in a finite-horizon LPS this may not be the case. 

Therefore, HQMM model is equivalent to a LPS structure where the decomposition entries $A_{i,x}^{b,a_1,a_2}$ are complex, and the $r\times r$ matrices formed by  $A_{i,x}^{b,\cdot,\cdot}$ for all $i,x,b$ are positive semi-definite, where the evaluation functional is restricted to the vectorized of the identity matrix, i.e., $\Vec{I}^T$.
\end{proof}

\begin{remark}\label{rem1}
Non-terminating uniform LPS (uLPS) are equivalent to HQMM \citep{adhikary2021quantum}. In uLPS boundary vectors originates from density matrices of arbitrary rank. As shown in \citep{adhikary2021quantum}, the evaluation functional of uLPS can be rescaled and transformed into a one that will converge to $\Vec{I}^T$ when $N\to \infty.$ To have the equivalency of finite-horizon HQMM and LPS, we need to restrict LPS models to the evaluation functional $\Vec{I}^T$, as stated in Lemma \ref{lem:HQMMvsLPS}.
\end{remark}

Further, we note that the prior works do not relate c-HMM to tensor networks, to the best of our knowledge. In the following result, we relate the c-HMM to c-MPS. 

\begin{lemma}\label{lem:cHMMcMPS}
c-HMM model is equivalent to a c-MPS model, where the entries of each decomposition are real and non-negative. 
\end{lemma}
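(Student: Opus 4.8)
The plan is to prove the two inclusions separately, mirroring the structure of the proof of Theorem~\ref{thm:chmmclps}: first exhibit an explicit map from a c-HMM to a non-negative c-MPS (the easy direction), and then recover a c-HMM from an arbitrary non-negative c-MPS by putting its transfer matrix into a stochastic canonical form. Throughout, I would handle normalization separately, as is done elsewhere in the paper, so that it suffices to match the \emph{unnormalized} functional forms.

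For the forward direction, I would start from the c-HMM joint law \eqref{eq:chmm} and marginalize over the hidden path. Defining, for each time step $t$ and observation symbol $x$, the $d\times d$ non-negative matrix
\begin{equation*}
A_{t,x}^{a,b} = p(x_t=b\mid x_{t-1}=a)\,p(o_t=x\mid x_t=b),
\end{equation*}
with the circular convention $x_0:=x_N$, a direct computation gives
\begin{equation*}
p(o_1,\dots,o_N)=\sum_{x_1,\dots,x_N} A_{1,o_1}^{x_N,x_1}A_{2,o_2}^{x_1,x_2}\cdots A_{N,o_N}^{x_{N-1},x_N},
\end{equation*}
which is exactly the c-MPS contraction \eqref{eq:c-MPS} of bond dimension $r=d$, with every decomposition entry a product of probabilities and hence real and non-negative. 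This settles one inclusion.

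For the converse, given a non-negative c-MPS with tensors $A_{t,x}^{a,b}\ge 0$, I would form the transfer matrices $E_t=\sum_x A_{t,x}$, which are entrywise non-negative, and appeal to the \emph{classical} Perron-Frobenius theorem in exactly the role that Lemma~\ref{lem:tracepreserving} plays in the quantum case. Each $E_t$ admits a positive leading eigenvalue $\lambda_t$ with a strictly positive right eigenvector; using the bond-indexed gauge freedom of the c-MPS, I would conjugate by the corresponding positive diagonal matrices and rescale by the $\lambda_t$ so that the transformed transfer matrices become stochastic. Conjugation by a positive diagonal matrix preserves non-negativity, and the rescaling only changes the represented object by a global constant; the circular (trace) closure is consistent precisely because the gauge at bond $0$ coincides with that at bond $N$, which is the classical analogue of the trace-preserving fixed-point condition of Lemma~\ref{lem:tracepreserving}. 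From the stochastic transfer matrices one reads off the transition probabilities, and the emission probabilities follow by normalizing the observation slices, yielding a c-HMM.

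The main obstacle is the converse direction, and within it the step of resolving the single non-negative transfer tensor into a genuine transition factor together with an emission factor of the node form $p(o_t\mid x_t)$ that appears in \eqref{eq:chmm}: after the Perron-Frobenius rescaling the natural object is the edge-indexed conditional $p(x_t=b,\,o_t=x\mid x_{t-1}=a)$, and reducing this to a node-emitting c-HMM is the delicate point. I would handle it by invoking the linear-case result that non-negative MPS are HMMs \citep{Glasser2019} and lifting that argument to the circular setting, where the only genuinely new ingredient is that the loop closure forces a single global rescaling through the Perron eigenvalues rather than the boundary-vector normalization available in the open-chain case. This is the classical counterpart of the trace-preserving reduction carried out in Lemma~\ref{lem:tracepreserving}, so I expect the technical heart to be verifying that the diagonal gauge can be chosen consistently around the cycle while preserving non-negativity.
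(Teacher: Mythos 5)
Your forward direction is correct and coincides with the paper's: taking the site tensor to be the product of the emission and transition matrices (in the paper's notation, $A_i = C_i B_i$) reproduces the c-HMM law \eqref{eq:chmm} as the c-MPS contraction \eqref{eq:c-MPS} with bond dimension $d$, and all entries are products of probabilities. The gap is in the converse, and it sits exactly at the point you yourself flag as ``delicate'' and then defer. After your Perron--Frobenius gauge fixing, reading transitions off the stochasticized transfer matrices $\tilde{E}_t$ and emissions by ``normalizing the observation slices'' produces an \emph{edge-emitting} model, $p(o_t \mid x_{t-1}, x_t) = \tilde{A}_{t,o_t}^{x_{t-1},x_t}/\tilde{E}_t^{x_{t-1},x_t}$, which is not of the c-HMM form \eqref{eq:chmm}, where the emission must depend on $x_t$ alone. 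Converting edge emission into node emission is not a rescaling or gauge issue, and it cannot in general be done while keeping $r$ hidden states: the hidden dimension must be allowed to grow. This is precisely the centerpiece of the paper's proof, which applies a canonical polyadic decomposition \citep{domanov2014canonical} to each non-negative site tensor, $A_{il}^{jk} = \sum_{s=1}^{r'} B_i^{js} C_i^{ls} D_i^{ks}$ with $r' \le \min(dr, r^2)$, so that the CPD index $s$ becomes the new hidden state, $C_i^{ls}$ becomes the node emission $p(\mathcal{O}_i = l \mid X_i = s)$, and $p(X_i = s \mid X_{i-1} = j) = \sum_u D_{i-1}^{ju} B_i^{us}$ becomes the transition; only afterwards are probabilities normalized edge by edge. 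Your plan to ``lift'' the open-chain result of \citep{Glasser2019} is an appeal to exactly this factorization, so as written the proposal assumes the key step rather than proving it, and your claim that the only new circular ingredient is a global rescaling understates what is missing.

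A secondary flaw: in the non-uniform circular setting, conjugating each $E_t$ by \emph{its own} Perron eigenvector does not yield a consistent gauge around the loop. What one actually needs are positive vectors $d_t$ satisfying $E_t d_t = \lambda_t d_{t-1}$ cyclically, i.e., $d_N$ must be a Perron eigenvector of the loop product $E_1 E_2 \cdots E_N$ with the remaining $d_t$ obtained by propagation, and strict positivity requires an irreducibility hypothesis you never state. This part is repairable, but note that the paper's proof avoids Perron--Frobenius entirely --- normalization is done locally after the CPD --- so the machinery you do develop in detail is not the machinery the lemma actually needs.
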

\begin{proof}
To map a c-HMM to a c-MPS with non-negative tensor elements, we can use the following procedure:
\begin{itemize}
    \item[\textbf{Input}:] A c-HMM, as shown in Figure \eqref{fig:inputCHMM}. 
    \item[Step 1:] Make a factor graph using transition (i.e., $B_i$ an $r\times r$ matrix corresponding to $p(X_i|X_{i-1})$, where $X_i,X_{i-1}\in\{1,\cdots,r\}$ for $i=1,\cdots,N$) and emission (i.e., $C_i$ an $r\times r$ diagonal matrix corresponding to $p(o_i|X_{i})$, where $X_i\in\{1,\cdots,r\}$ for $i=1,\cdots,N$) matrices of the given c-HMM, as shown in Figure \eqref{fig:buildfactorg}.
    \item[Step 2:] Build tensors $A_i=C_iB_i$ (i.e., $r\times r$ matrices) using the factor nodes obtained from Step 1, as shown in Figure \eqref{fig:buildtensors}.
    \item[\textbf{Output}:] A tensor ring with non-negative tensor elements corresponding to the given c-HMM, as shown in Figure \eqref{fig:outputCMPS}.
\end{itemize}
\begin{figure}[!ht] 
  \begin{subfigure}[t]{0.5\linewidth}
    \centering
    \includegraphics[width=\linewidth]{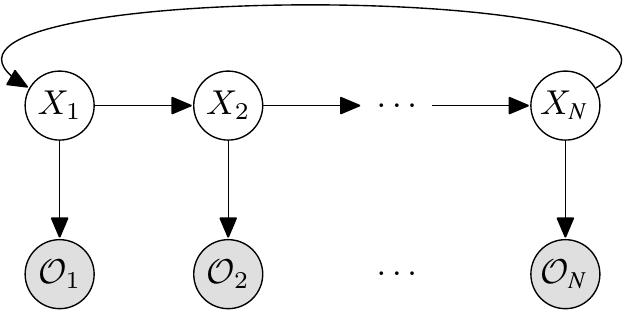} 
    \caption{\textbf{Input}: A c-HMM.} 
    \label{fig:inputCHMM} 
    \vspace{4ex}
  \end{subfigure}
  \begin{subfigure}[t]{0.5\linewidth}
    \centering
    \includegraphics[width=\linewidth]{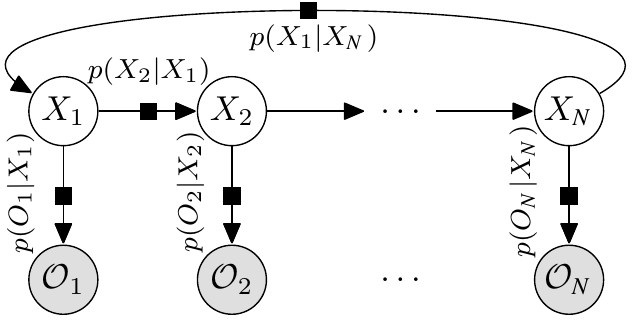} 
    \caption{Make a factor graph using transition and emission matrices.} 
    \label{fig:buildfactorg} 
    \vspace{4ex}
  \end{subfigure} 
  \begin{subfigure}[t]{0.5\linewidth}
    \centering
    \includegraphics[width=\linewidth]{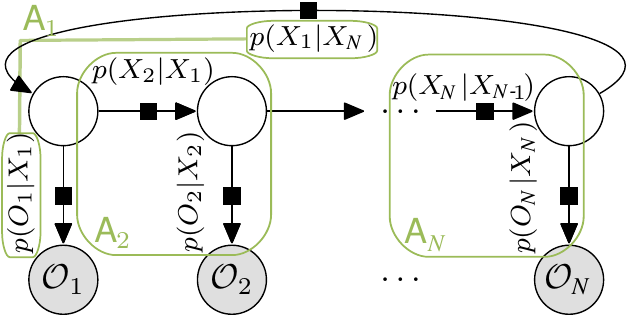} 
    \caption{Create tensors:\\ \centering$A_i=p(X_i|X_{i-1})p(\mathcal{O}_i|X_i)$.} 
    \label{fig:buildtensors} 
  \end{subfigure}
  \begin{subfigure}[t]{0.5\linewidth}
    \centering
    \includegraphics[width=\linewidth]{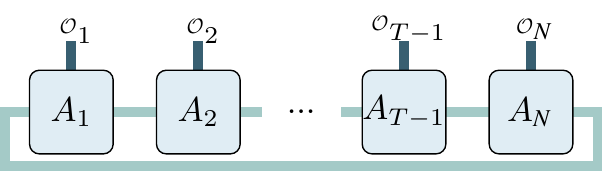} 
    \caption{\textbf{Output}: The corresponding tensor ring for the given CHMM.} 
    \label{fig:outputCMPS} 
  \end{subfigure} 
  \caption{Illustration of mapping of a c-HMM to a c-MPS with non-negative tensor elements.}
  \label{fig:mapCHMM2CMPS} 
\end{figure}

To Map a c-MPS with non-negative tensor elements to a c-HMM , we can use the following procedure:
\begin{itemize}
    \item[\textbf{Input}:] A tensor ring (c-MPS) with non-negative tensor elements, as shown in Figure \eqref{fig:inputCMPS}. 
    \item[Step 1:] Use the canonical decomposition of order-3 tensors\footnote{Canonical Polyadic Decomposition (CPD) of a third-order tensor is a minimal decomposition into a sum of rank-1 tensors.} \citep{domanov2014canonical}, to factorize each tensor in the given non-negative CMPS as follows: $A_{il}^{jk}=\sum_{s=1}^{r'} B_i^{js}C_i^{ls}D_i^{ks}$, where $r'\le\min(dr,r^2)$, as shown in Figure \eqref{fig:buildunnormfactorg}.
    \item[Step 2:] Set emission matrix elements as $p(\mathcal{O}_i=l|X_i=s)=C_i^{ls}$ and transition matrix elements as $p(X_i=s|X_{i-1}=j)=\sum_u D_{i-1}^{ju}B_i^{us}$.
    \item[Step 3:] Normalize the probabilities on every edge, as shown in Figure \eqref{fig:normfactorgCHMM}.
    \item[Step 4:] Build the normalized factor graph corresponding to the output c-HMM, as shown in Figure \eqref{fig:factorCHMM}.
    \item[\textbf{Output}:] A c-HMM corresponding to the given c-MPS, as shown in Figure \eqref{fig:outputCHMM}.
\end{itemize}
\begin{figure*}[!ht] 
  \begin{subfigure}[t]{0.5\linewidth}
    \centering
    \includegraphics[width=\linewidth]{C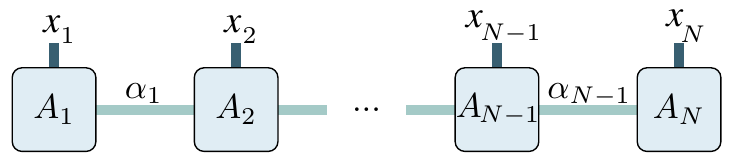} 
    \caption{\textbf{Input}: A c-MPS with non-negative elements.} 
    \label{fig:inputCMPS} 
    \vspace{4ex}
  \end{subfigure}
  \begin{subfigure}[t]{0.5\linewidth}
    \centering
    \includegraphics[width=\linewidth,page=1]{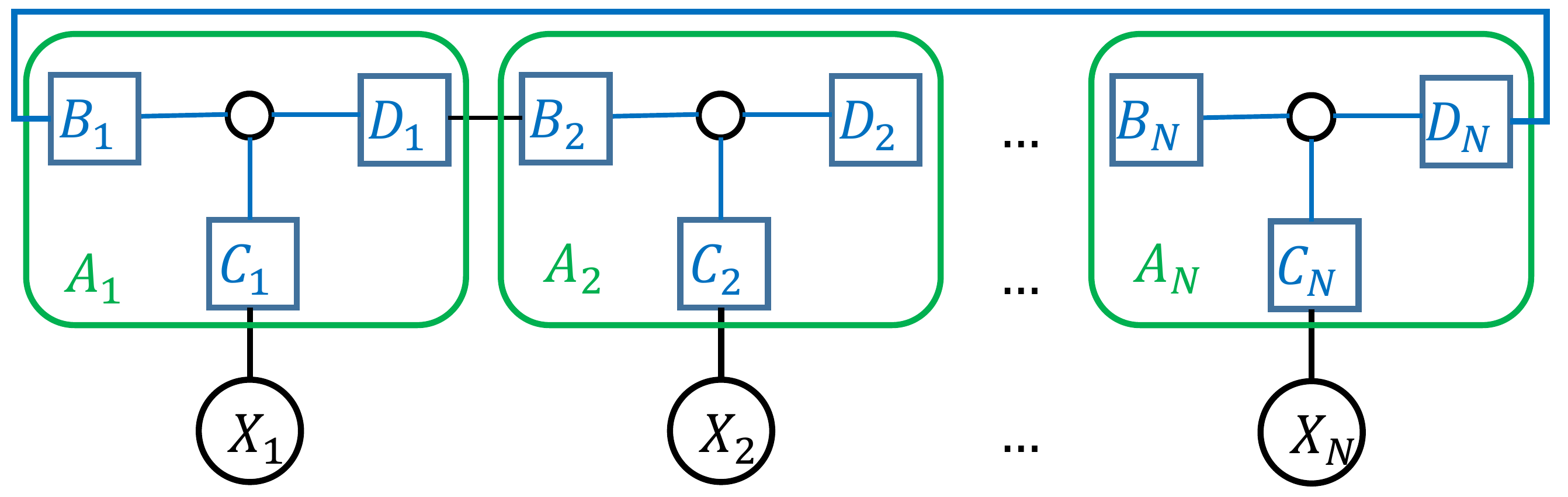} 
    \caption{Use CPD decomposition to build an unnormalized factor graph.} 
    \label{fig:buildunnormfactorg} 
    \vspace{4ex}
  \end{subfigure} 
  \begin{subfigure}[t]{0.5\linewidth}
    \centering
    \includegraphics[width=\linewidth,page=2]{mapT2H.pdf} 
    \caption{Normalize probabilities on every edge.} 
    \label{fig:normfactorgCHMM} 
  \end{subfigure}
  \begin{subfigure}[t]{0.5\linewidth}
    \centering
    \includegraphics[width=.75\linewidth,page=3]{mapT2H.pdf} 
    \caption{Normalized factor graph corresponding to the output c-HMM.} 
    \label{fig:factorCHMM} 
  \end{subfigure} 
  \centering
  \begin{subfigure}[t]{0.5\linewidth}
    \centering
    \includegraphics[width=.5\linewidth,page=4]{mapT2H.pdf}
    \subcaption{\textbf{Output}: The corresponding CHMM for the given c-MPS.} 
    \label{fig:outputCHMM} 
  \end{subfigure}
  \caption{Illustration of mapping of a c-MPS with non-negative tensor elements to a c-HMM.}
  \label{fig:mapCMPS2CHMM} 
\end{figure*} 
\end{proof}
\if 0
Note that, we can show the relationship between tensor rings and circular hidden Markov models as follows (see the supplementary materials for the proof):
\begin{tcolorbox}[enhanced,frame style image=blueshade.png,
  opacityback=0.75,opacitybacktitle=0.25,
  colback=blue!5!white,colframe=blue!75!black]
  \begin{center}
      \textbf{Non-negative circular MPS are circular HMM.}
  \end{center}
\end{tcolorbox}
\fi 

\begin{figure*}[!ht]
    \begin{subfigure}[b]{0.32\textwidth}
    \centering
        \includegraphics[width=\textwidth]{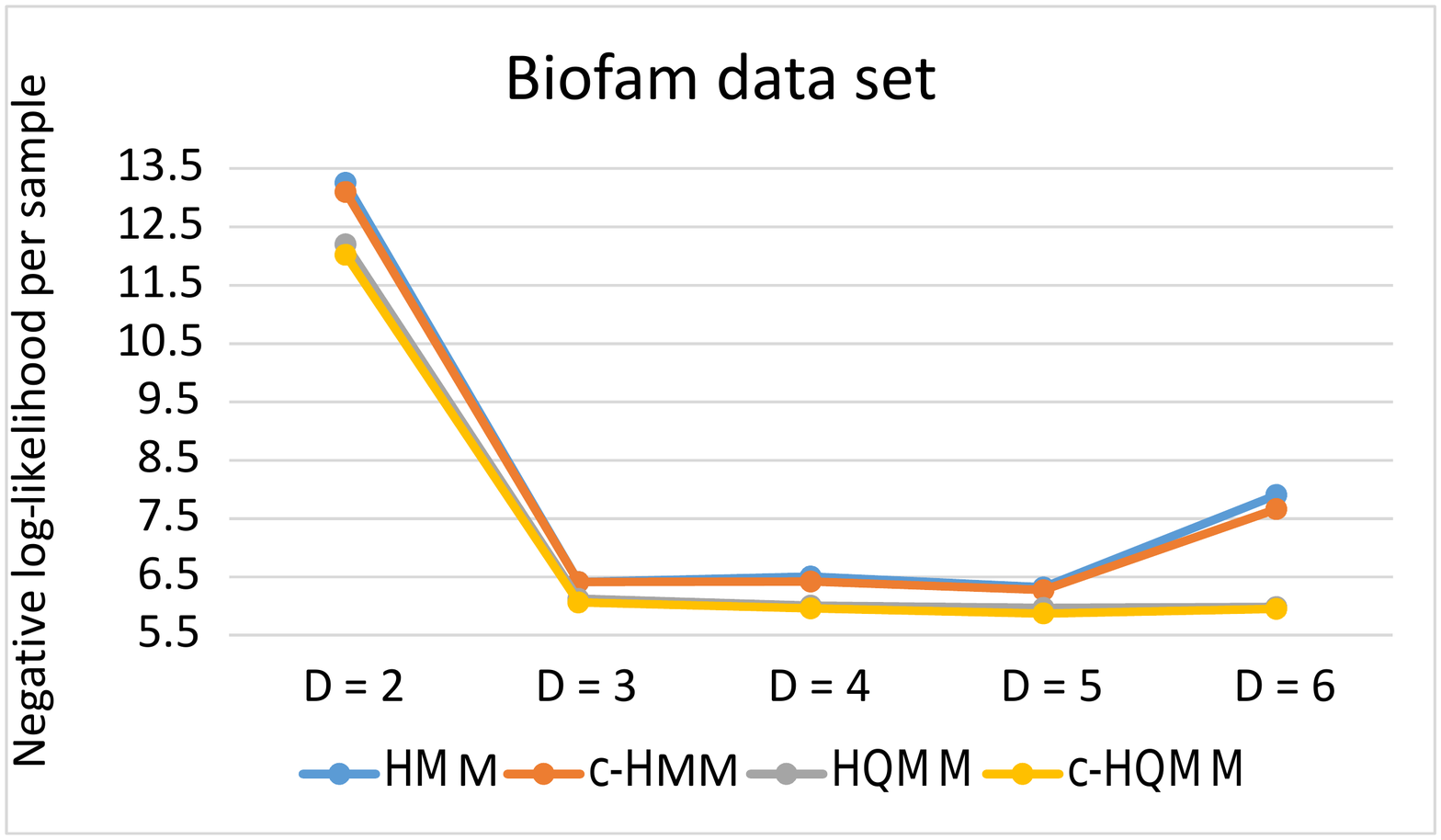}
        \caption{}
        \label{fig:biofam}
    \end{subfigure}
    \hfill
    \begin{subfigure}[b]{0.32\textwidth}
    \centering
        \includegraphics[width=\textwidth]{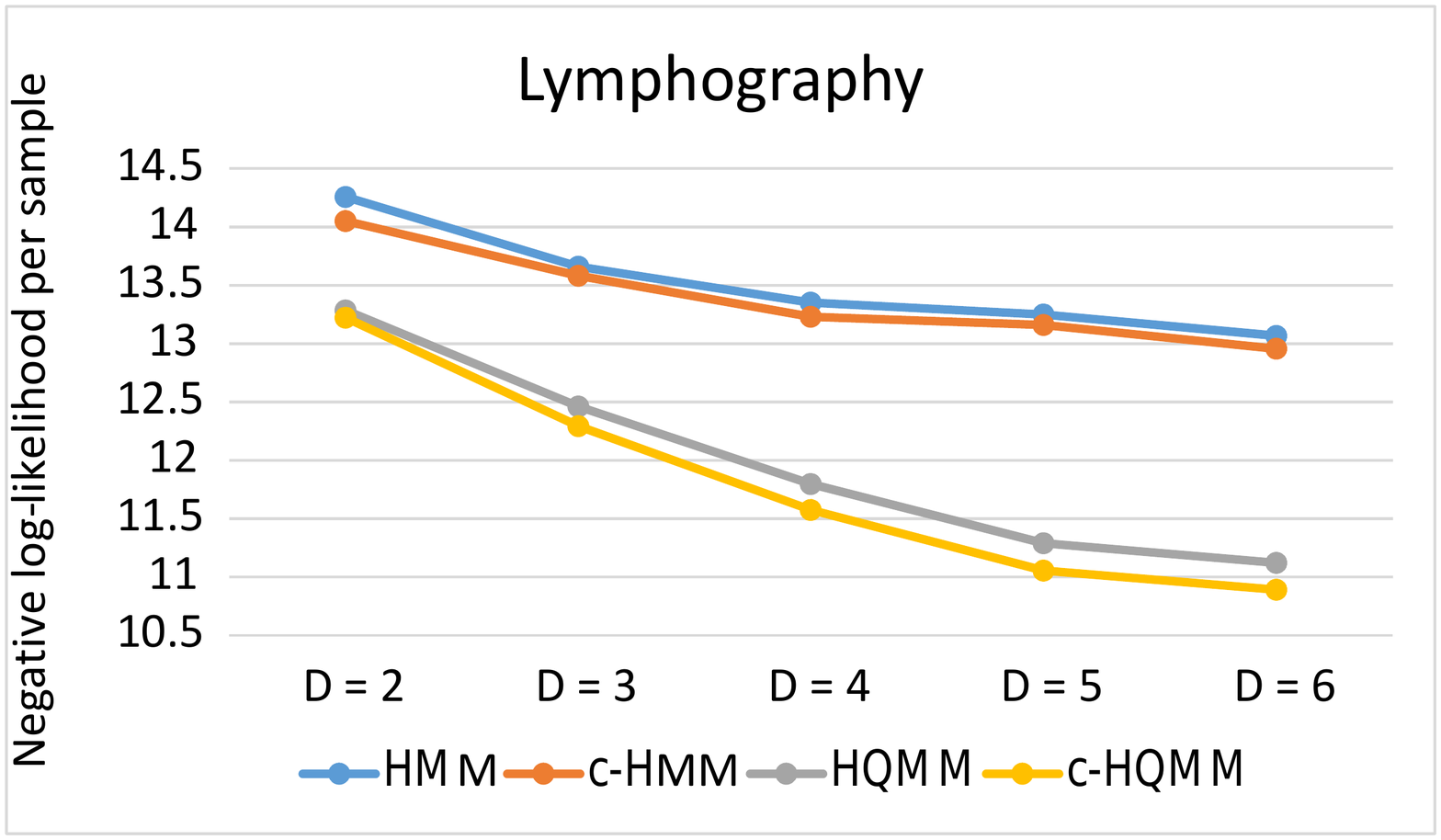}
        \caption{}
        \label{fig:lympho}
    \end{subfigure}
    \begin{subfigure}[b]{0.32\textwidth}
    \centering
        \includegraphics[width=\textwidth]{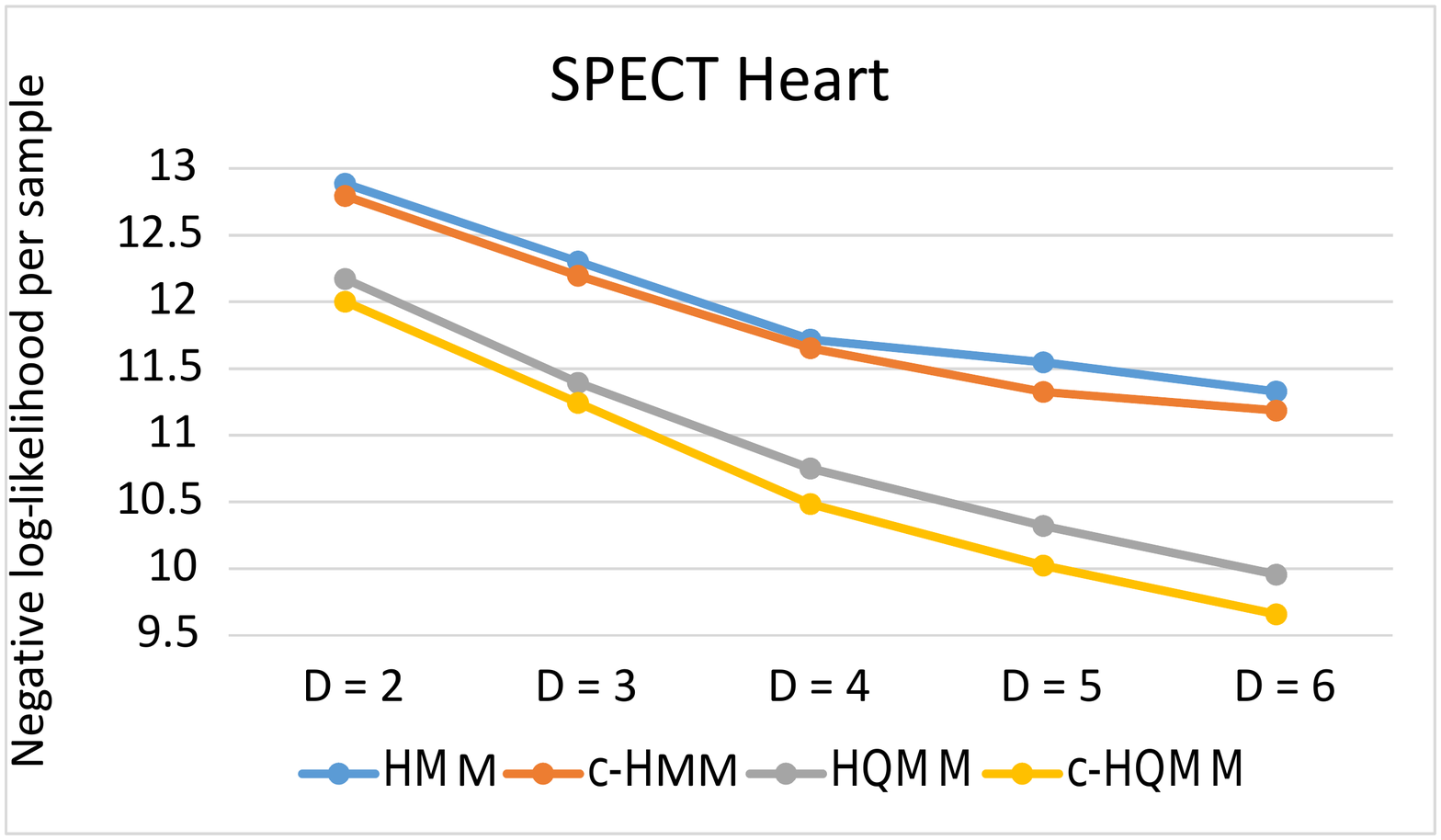}
        \caption{}
        \label{fig:spect}
    \end{subfigure}
    
    \begin{subfigure}[b]{0.32\textwidth}
    \centering
        \includegraphics[width=\textwidth]{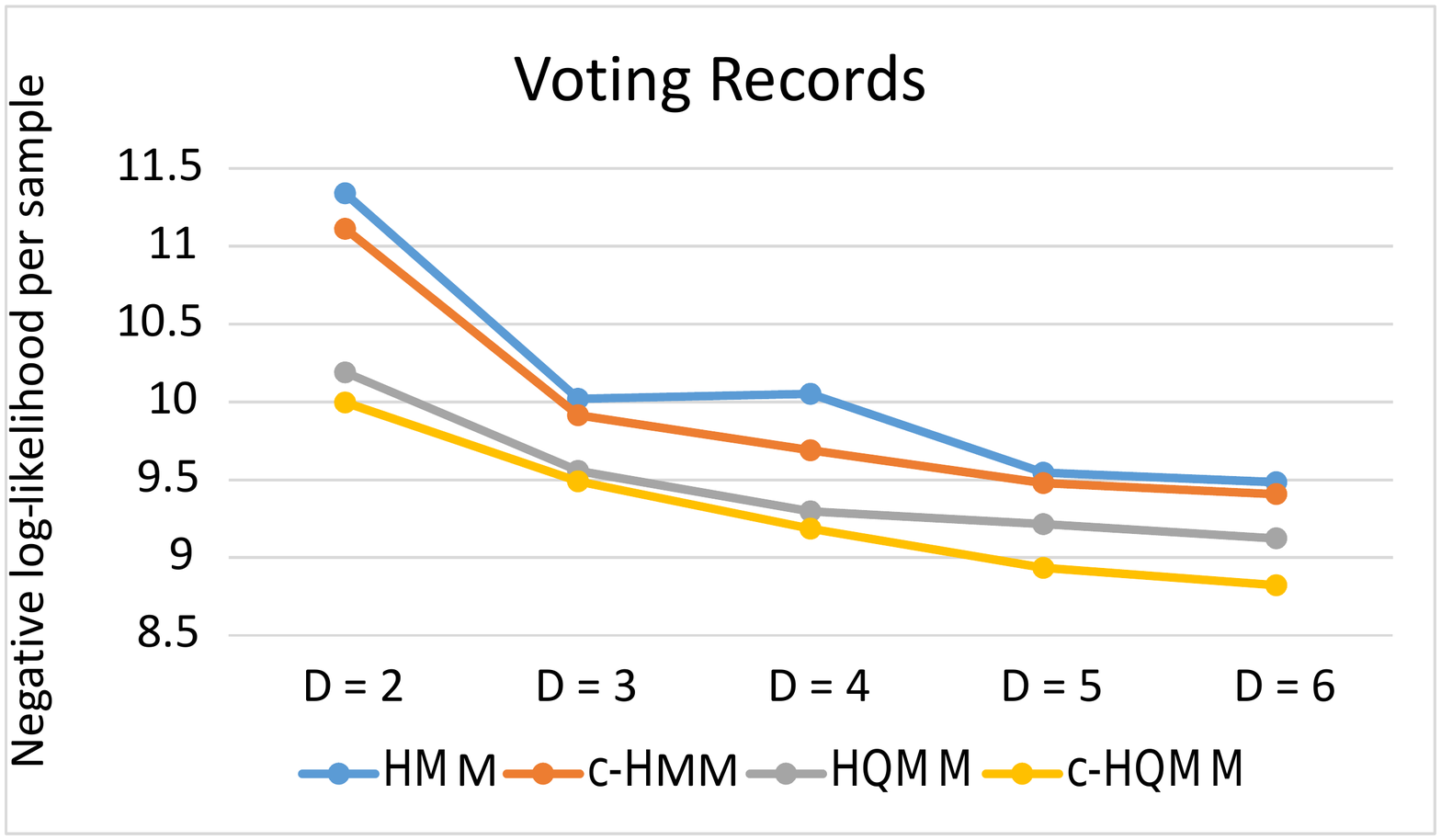}
        \caption{}
        \label{fig:voting}
    \end{subfigure}
    \hfill
    \begin{subfigure}[b]{0.32\textwidth}
    \centering
        \includegraphics[width=\textwidth]{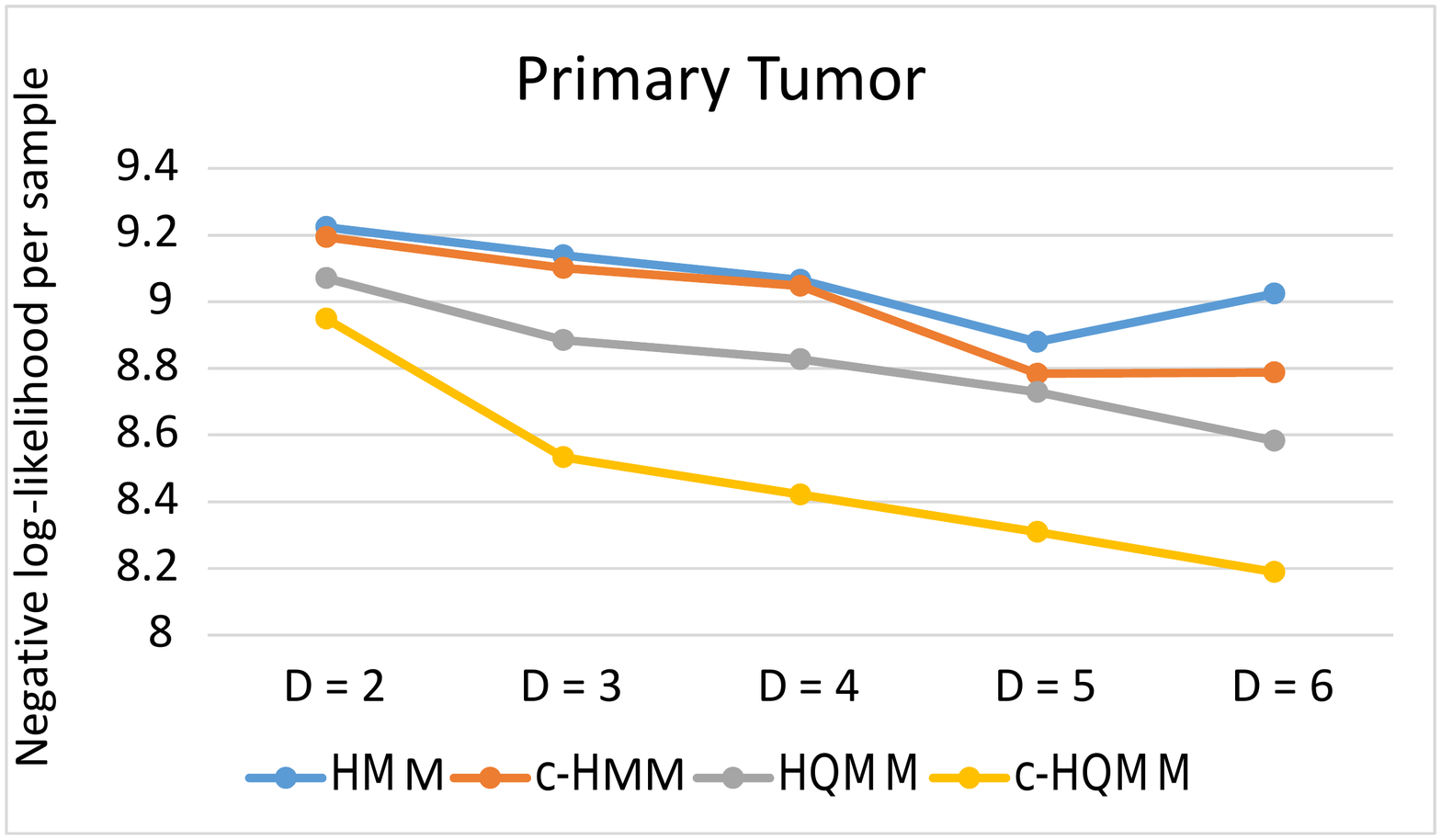}
        \caption{}
        \label{fig:tumor}
    \end{subfigure}
    \hfill
    \begin{subfigure}[b]{0.32\textwidth}
    \centering
        \includegraphics[width=\textwidth]{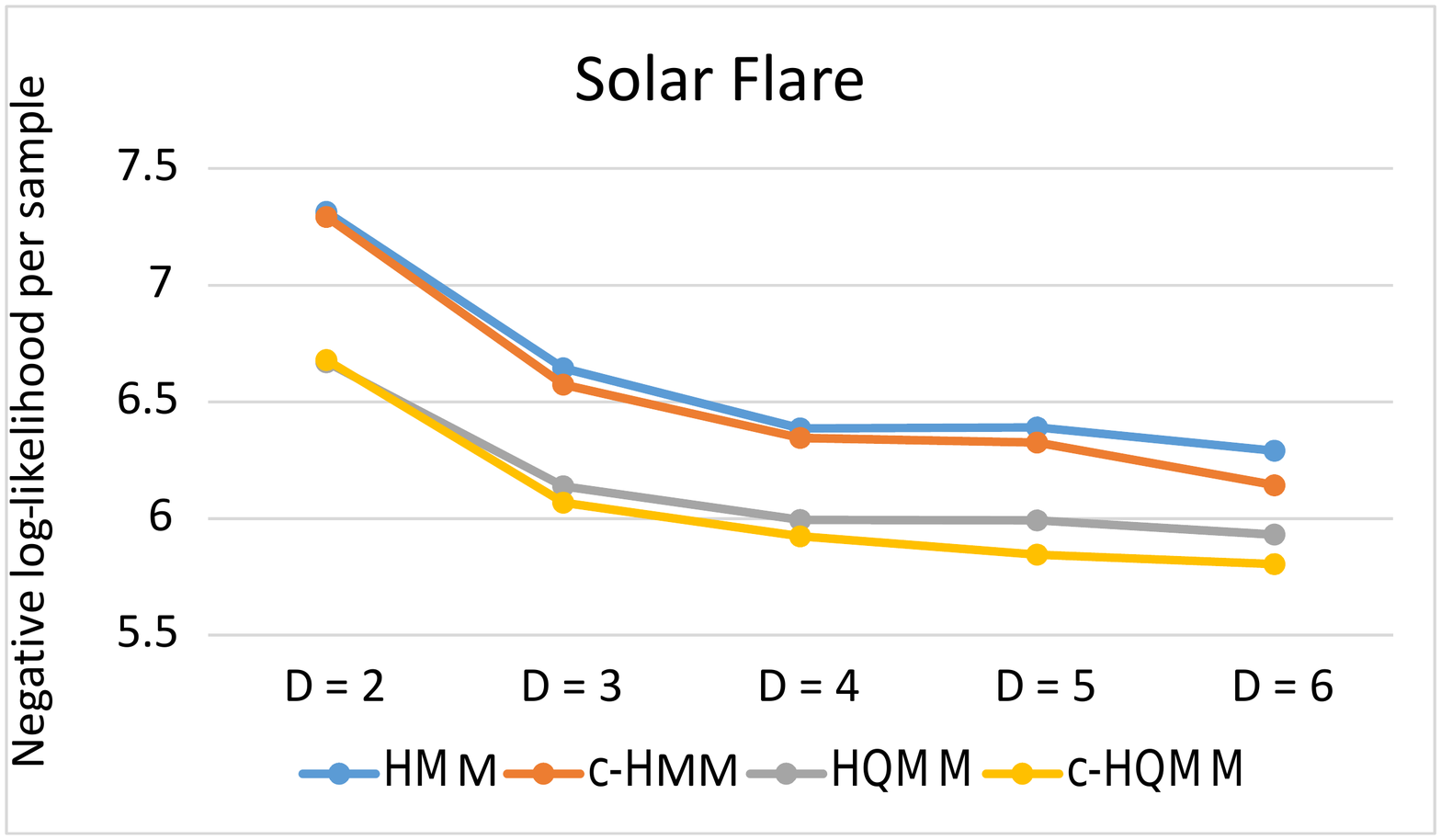}
        \caption{}
        \label{fig:flare}
    \end{subfigure}
    \caption{Maximum likelihood estimation with tensor rings MPS, c-MPS, LPS, and c-LPS for learning HMM, c-HMM, HQMM, and c-HQMM from the data on different data sets: a) biofam data set of family life states from the Swiss Household Panel biographical survey \citep{muller2007classification}; data sets from the UCI Machine Learning Repository \citep{Dua2019}: b) Lymphography, c) SPECT Heart, d) Congressional Voting Records, e) Primary Tumor, and f) Solar Flare.}
    \label{fig:evaluation}
\end{figure*}

\begin{figure}[!ht]
	\centering
	\includegraphics{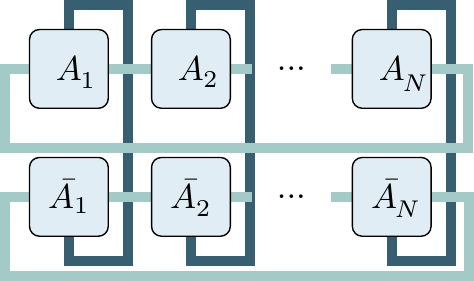}
	\caption{Contraction of tensor ring to compute:  $Z_T=\sum_{X_1,\cdots,X_N}T_{X_1,\cdots,X_N}.$}
	\label{fig:ZTcontract}
\end{figure}
\section{Learning Algorithm for Circular LPS Models}\label{sec:learning}

In this section we propose an algorithm for learning c-LPS Models as in Theorem \ref{thm:chmmclps} via a maximum likelihood estimation (MLE) approach. The proposed algorithm is a modification of the algorithm proposed in \citep{Glasser2019} for learning LPS models, except that we take into account the positive semi-definite nature of the decomposition and the cyclic structure. 

\begin{problem}[MLE for Distribution Approximation]\label{problem:mle}
	Assume that $\{\textbf{\textrm{x}}_i=(x_1^i,\cdots,x_N^i)\}_{i=1}^n$ is a sample of size $n$ from an experiment with $N$ discrete random variables. To estimate this discrete multivariate distribution, we use c-LPS model as defined in section \ref{sec:clps}. So, we have:
	\begin{equation*}
		\begin{split}
			p({x_1,\cdots,x_N}) & \approxeq\sum_{\{\alpha_i,\alpha'_i\}_{i=1}^N=1}^r\sum_{\{\beta_i\}_{i=1}^N=1}^{\mu} A_{1,x_1}^{\beta_1,\alpha_{N},\alpha_1}\overline{A_{1,x_1}^{\beta_1,\alpha'_{N},\alpha'_1}}\cdots A_{N,x_{N}}^{\beta_N,\alpha_{N-1},\alpha_{N}}\overline{A_{N,x_{N}}^{\beta_N,\alpha'_{N-1},\alpha'_{N}}},
		\end{split}
	\end{equation*}
	where the tensor decomposition entries follow the structure in Theorem \ref{thm:chmmclps}. Our objective here is to estimate tensor elements of the c-LPS, i.e., $w=A_{i,x_i}^{\beta_i,\alpha_{L_i},\alpha_{R_i}}$, for $i=1,\cdots,N$. For this purpose, we minimize the negative log-likelihood:
	\begin{equation}\label{eq:loglik}
		L=-\sum_i \log\frac{T_{\textbf{\textrm{x}}_i}}{Z_T}
	\end{equation}
	where $T_{\textbf{\textrm{x}}_i}$ is given by the contraction of c-LPS, and $Z_T=\sum_{{\textbf{\textrm{x}}_i}}T_{\textbf{\textrm{x}}_i}$ is a normalization factor.
\end{problem}


To find the optimal solution, we calculate the derivative of the log-likelihood with respect to $w$ as follows:
\begin{equation}\label{eq:derivative}
	\partial_wL=-\sum_i \frac{\partial_wT_{\textbf{\textrm{x}}_i}}{T_{\textbf{\textrm{x}}_i}}-\frac{\partial_wZ_T}{Z_T}
\end{equation}
We use a mini-batch gradient-descent algorithm to minimize the negative log-likelihood. At
each step of the optimization, the sum is computed over a batch of training instances. The parameters
in the tensor network are then updated by a small step in the inverse direction of the gradient. To satisfy the condition in Theorem \ref{thm:chmmclps}, we project the $r\times r$ matrices formed by  $A_{i,x}^{b,\cdot,\cdot}$ for all $i,x,b$ to the positive semi-definite (p.s.d.) matrices using the standard Singular Value Decomposition (SVD) method \citep{dattorro2010convex}.  Note that we use Wirtinger derivatives with respect to the conjugated tensor elements. Now, we explain how to compute $\partial_wZ_T$, $Z_T$,  $\partial_wT_{\textbf{\textrm{x}}_i}$, and $T_{\textbf{\textrm{x}}_i}$ in equation \eqref{eq:derivative}. For a c-LPS of puri-rank $r$, the normalization $Z_T$ can be computed by contracting the tensor network:
\begin{equation}\label{eq:contractZT}
	Z_T=\sum_{x_1,\cdots,x_N}T_{x_1,\cdots,x_N} 
\end{equation}

This contraction is performed, as shown in Figure \eqref{fig:ZTcontract}, from left to right by contracting at each
step the two vertical indices (corresponding to $d_i$ and $\bar{d}_i$ with respect to the supports of $X_i$ and $\bar{X}_i$) and then each of the two horizontal indices (with respect to $\alpha_i$s and $\alpha'_i$s, respectively). Finally, we trace out the indices corresponding to the rings. In this contraction,
intermediate results from the contraction of the first $i$ tensors are stored in $E_i$, and the same procedure
is repeated from the right with intermediate results of the contraction of the last $N-i$ tensors stored in $F_{i+1}$. The derivatives of the normalization for each tensor are then computed as
\begin{center}
	\begin{tcolorbox}[hbox]
		$
		\begin {aligned}
		\frac{\partial_wZ_T}{\partial_w\bar{A}_{i,m}^{j,k,l}}
		\qquad = \qquad
		\raisebox{-7mm}{\includegraphics[scale=.25]{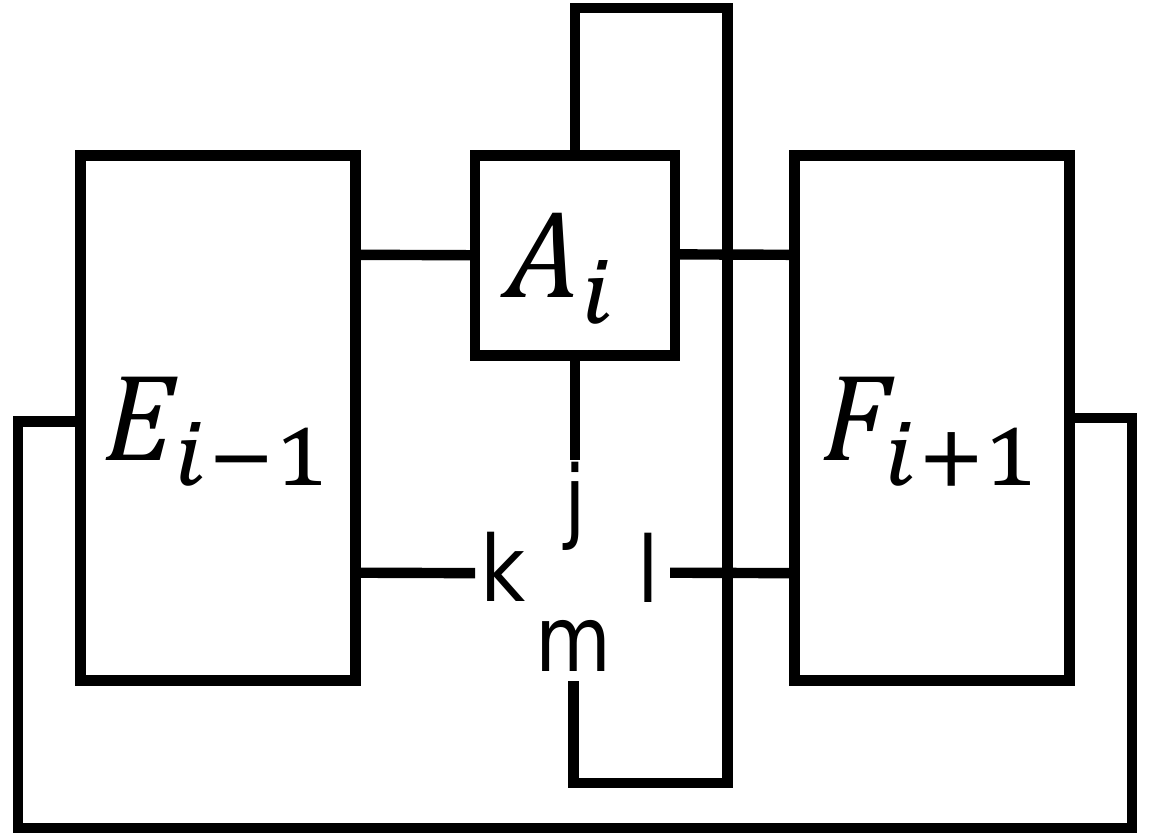}}
		\end {aligned}
		$
	\end{tcolorbox}
\end{center}

Computing $T_{\textbf{\textrm{x}}_i}$ for a training sample and its derivative is
done in the same way, except that the contracted index corresponding to an observed variable is now fixed to its observed value. We note that a similar approach to learn the model can be used for HQMM, HMM, and c-HMM structures. 

\section{Numerical Evaluations: Maximum Likelihood Estimation on Real Data}\label{sec:experiments}
To evaluate the performance of the proposed algorithm for learning c-HQMMs, we used the same datasets as used in  \citep{Glasser2019} and learn HMM, c-HMM, HQMM, and c-HQMM using their respective tensor representations. HMM is equivalent to  MPS$_{\mathbb{R}_{\ge 0}}$, and is a baseline for other structures. We note that equivalence of both c-HMMs and c-HQMMs to tensor networks first appear in this paper. Further, the equivalence of LPS and HQMM for finite $N$ with non-uniform Kraus operators is also studied for the first time in this paper. We compare the performance of training HMM, c-HMM, HQMM, and c-HQMM using equivalent tensor representations on  six different real data of categorical variables, where following parameters are used:


\begin{itemize}
\itemsep0em
    \item Bond dimension/rank of the tensor networks: $r=2, 3, 4,5 \textrm{ and } 6$. 
    \item Learning rate was chosen using a grid search on powers of 10 going from $10^{-5}$ to $10^5$.
    \item Batch size, i.e., the number of training samples per minibatch, was set to 20.
    \item Number of iterations was set to a maximum of 1000. 
    \item The dimension of the purification index, i.e., $\mu$ for LPS and c-LPS was set to 2.
\end{itemize}

Each data point reported here is the lowest negative log-likelihood obtained from 10 trials with different initialization of tensors.

\textbf{Results:} The obtained results, summarized in Figure \ref{fig:evaluation}, show that (1) The tensor representations can be used to learn different HMMs. (2) We observe
that despite the different algorithm choice, on almost all data sets, c-LPS and LPS lead to better modeling of the data distribution  for the same rank as compared to MPS. (3) The results indicate that c-HQMM outperforms HQMM, HMM, and c-HMM. (4) In many cases, the performance difference between LPS and c-LPS for rank 5 and 6 is more significant than the cases with the rank 2 and 3. Further, the improvement depends on the dataset. We also note that we plot negative of log likelihoods, so the gap in the likelihoods is larger. 
The results suggest that in generic settings HQMM and c-HQMM should be preferred over both HMM and c-HMM models, respectively. Further, c-HQMM gives the best performance among the considered models.  

\section{Conclusion}
This paper proposes a new class of hidden Markov models, that we called circular  Hidden Quantum Markov Models (c-HQMMs).
c-HQMMs can be used to model temporal data in quantum datasets (with classical datasets as a special case). We proved that c-HQMMs are  equivalent to circular LPS models with positive-semidefinite constraints on certain matrix structure in the LPS decomposition. Leveraging this result, we proposed an MLE based algorithm for learning c-HQMMs from data via c-LPS. We evaluated the proposed learning approach  on six real datasets, demonstrating the advantage of c-HQMMs on multiple datasets as compared to  HQMMs, circular HMMs, and HMMs. 

\section*{Acknowledgements}
 This research was supported by the Defense Advanced Research Projects Agency (DARPA) Quantum Causality [Grant No. HR00112010008].

\bibliographystyle{IEEEtran}
\bibliography{references}
\end{document}